\renewcommand\tableofcontents{\listoftoc*{toc}}
\renewcommand\AB@affilsepx{, \protect\Affilfont}
\renewcommand\Affilfont{\large}
\author[1]{Niclas Boehmer}
\author[2]{Markus Brill}
\author[3]{Alfonso Cevallos}
\author[3]{Jonas Gehrlein} 
\author[4]{Luis S\'anchez-Fern\'andez}
\author[5]{Ulrike Schmidt-Kraepelin}
\affil[1]{Harvard University} 
\affil[2]{University of Warwick} 
\affil[3]{Web3 Foundation} 
\affil[4]{Universidad Carlos III de Madrid}
\affil[5]{TU Eindhoven}
\title{Approval-Based Committee Voting in Practice: \\
A Case Study of (Over-)Representation in the Polkadot Blockchain}
\DeclareMathOperator*{\argmax}{arg\,max}
\DeclareMathOperator*{\argmin}{arg\,min}
\newcommand{\usk}[1]{}
\newcommand{\mb}[1]{}
\newcommand{\lsf}[1]{}
\newcommand{\jg}[1]{}
\newcommand{\toPol}[1]{}
\pgfplotsset{compat = 1.3}
\date{\vspace{-1cm}}
\newtheorem{definition}{Definition}
\newtheorem{theorem}{Theorem}
\newtheorem{proposition}{Proposition}
\begin{document}

\newpage

\maketitle

\begin{abstract}
\begin{center}
		\textbf{\textsf{Abstract}} \smallskip
	\end{center}
We provide the first large-scale data collection of real-world approval-based committee elections. These elections have been conducted on the Polkadot blockchain as part of their \textit{Nominated Proof-of-Stake} mechanism and contain around one thousand candidates and tens of thousands of (weighted) voters each. We conduct an in-depth study of application-relevant questions, including a quantitative and qualitative analysis of the outcomes returned by different voting rules. Besides considering proportionality measures that are standard in the multiwinner voting literature, we pay particular attention to less-studied measures of overrepresentation, as these are closely related to the security of the Polkadot network. We also analyze how different design decisions such as the committee size affect the examined measures. 
\end{abstract}

\vspace{13pt}
\hrule
\vspace{8pt}
{\small\tableofcontents}
\vspace{20pt}
\hrule

\section{Introduction}
Approval-based committee (ABC) voting describes the task of selecting a subset of candidates based on approval-style preferences of voters over candidates. 
A central concern in such elections is to ensure that voters' opinions are ``proportionally'' reflected in the selected set of candidates, i.e., in the \emph{committee}. 
Formally capturing proportional representation as axioms, and designing rules that are guaranteed to satisfy these axioms, is a very active area of research; see the book by \citet{LaSk22a}. 
While this effort has resulted in a good theoretical understanding of the axiomatic aspects of different voting rules, there are only very few studies that examine the actual behavior of such rules on specific real-world or synthetically generated 
voting instances \citep{EFL+17a,SFJ+22a,mapofRules,DBLP:journals/corr/abs-2305-08970}.
Nevertheless, these few empirical works have already proved useful, as they found that different voting rules often produce similar outcomes and that most voting rules tend to significantly outperform their worst-case proportionality guarantees. 
This already motivated the development of new proportionality axioms \citep{SkGo22a,BrPe23a}. 
Nevertheless, as proportionality axioms seem to lose their discriminative power in practice, the problem arises of how to measure the proportionality of outcomes and how to quantify the differences between proportional rules. 

One reason for the shortage of empirical works might be the lack of real-world data. 
Accordingly, previous empirical works often either resorted to synthetically generated data or converted data from other voting applications such as ordinal elections or participatory budgeting to fit the ABC voting setting.
Despite the fact that previous research has named a multitude of potential applications of ABC voting ranging from political elections \citep{BLS18a} to recommender systems \citep{DBLP:conf/recsys/StreviniotisC22,DBLP:conf/prima/StreviniotisC22,DBLP:conf/eumas/GawronF22} 
to forest management \citep{PBSH20a}, there are only very few applications where ABC elections have been implemented. 
A  so-far mostly unexplored exception are blockchain protocols that conduct ABC elections on a day-to-day basis. 
Specifically, these elections occur in blockchains using the Nominated Proof-of-Stake (NPoS) protocol. In this system, a subset of stakeholders, called \textit{validators}, 
are elected to run the consensus protocol, which is crucial for the integrity of the blockchain. 
The problem of selecting the validators can be modeled as an ABC voting problem, and, indeed, on the \textit{Polkadot} network (\url{https://polkadot.network}),  a proportional ABC voting rule is used \citep{polkadot-overview,CeSt21a}.\footnote{Polkadot currently uses Phragmén's sequential rule, but considers switching to the Phragmms rule \citep{CeSt21a}.}

\subsection{Our Contributions}
We complement the mostly theoretical literature on ABC voting in various directions, thereby contributing tools and insights for future empirical works. 
\begin{itemize}
    \item  We compile the first collection of real-world ABC elections consisting of $496$ 
    elections from the Polkadot blockchain. These elections  contain between \num{18202} and \num{48025} voters and between \num{920} and \num{1080} candidates.
    \item We conduct an in-depth study of the behavior of different voting rules, with a particular focus on their application in Polkadot. We compare the outcomes and analyze whether outcomes under- or overrepresent voters.
    Notably, security concerns in Polkadot provide a novel view on (and motivation for) the goal of preventing overrepresentation. 
    Our empirical findings, summarized in \Cref{sec:conc}, 
    contribute to the ongoing discussion in voting and blockchain research regarding the selection of voting rules. These insights offer compelling justifications for the use of more sophisticated, proportional voting rules, similar to those implemented in Polkadot.
    \item As part of our analysis, we initiate the study of quantitative measures for over- and underrepresentation  
    that are needed to distinguish and describe the behavior of voting rules on real-world data. We do so by adapting known proportionality axioms and by introducing a new measure regarding the prevention of overrepresentation. 
    \item We consider design decisions 
    concerning the chosen committee size and the question whether (copies of) the same candidate can be selected multiple times. 
    We make recommendations in light of our formulated desiderata.
\end{itemize}

\noindent All collected elections and the code for our experiments are available at \url{github.com/n-boehmer/ABC-practice-Polkadot}.
Another blockchain network that follows the NPoS protocol is \textit{Kusama}  (\url{https://kusama.network}). In Appendix~\ref{kusama}, we present $1520$ elections conducted in the Kusama network, with roughly \num{2000} candidates and \num{10000} voters each, and verify that most of our empirical findings also hold for the Kusama elections.

\subsection{Related Work}
There is a large literature that develops  (axiomatic) measures to assess the proportionality of a committee \citep{ABC+16a, SFF+17a, LaSk22a, BrPe23a,Skow21a, PPS21a}.
Questions typically examined in \textit{empirical} works on ABC voting concern the similarity of outcomes returned by different voting rules \citep{reichertElkind,mapofRules,EFL+17a}, how often voting rules satisfy certain proportionality axioms \citep{mapofRules,DBLP:journals/corr/abs-2305-08970}, and how often such axioms are satisfied by randomly selected committees \citep{BFNK19a,BrPe23a}. 
Complementary to our work, \citet{SFJ+22a} also provided tools for more empirical studies of ABC rules by proposing several synthetic models for generating data as well as a framework for visualizing the data and experimental results.
On a more general note, empirical works are more common in the context of ordinal voting. In terms of methodological contributions, the \mbox{\textit{PrefLib}} \citep{DBLP:conf/aldt/MatteiW13,trendspreflib} and \mbox{\textit{Pabulib}} \citep{DBLP:conf/ijcai/FaliszewskiFPP023}  databases, which contain large collections of real-world preference and participatory budgeting data and the recently developed map of elections framework \citep{DBLP:conf/atal/SzufaFSST20,DBLP:conf/ijcai/BoehmerBFNS21} are notable projects.

Regarding the usage of voting in blockchains, there are also some examples beyond Polkadot and Kusama. 
However, most of them use non-proportional voting rules. 
For instance, the EOS network uses the voting rule \emph{approval voting} \citep{grigg2017eos}, where the candidates with the highest number of approvals get selected. Notably, the usage of this rule has led to a series of complaints regarding centralization issues  \citep{downgrade,garg2019eos} 
(arguably related to the non-proportionality of this rule).  

\section{Preliminaries} \label{sec:preliminaries}
In this section, we formally introduce approval-based committee (ABC) elections and describe \emph{Polkadot}, a blockchain network that carries out such elections every day and serves as the main source of data for this paper.
Due to the general homogeneity of our data observed in \Cref{sec:description}, in our following analysis we mostly report average values, as the variance is usually quite low.

\subsection{Approval-Based Committee Voting}
\label{sec:abc}

For any $n\in \mathbb{N}$, we define $[n]=\{1,\dots, n\}$. A \textit{(weighted) ABC election} $E=(C,V,A,w,k)$ is defined by a set $C=\{c_1,\dots,c_m\}$ of candidates, a set $V=[n]$ of voters, an \emph{approval profile} $A$, a \emph{weight function} $w:V\mapsto [0,1]$ that maps each voter to its \emph{voting weight}, and the size~$k$ of the committee to be selected, where a \textit{committee} is a subset of the candidates. 
The approval profile $A$  consists of a subset $A_v\subseteq C$ for every voter $v\in V$, containing all candidates $v$ approves of. We allow voters to have different \emph{voting weights} (as in Polkadot), which are captured by the weight function $w$. Without loss of generality, we assume that $\sum_{v\in V} w(v)=1$.
For a subset $V'\subseteq V$ of voters, we let $w(V'):=\sum_{v\in V'} w(v)$ denote the sum of their voting weights. 
For a candidate $c\in C$, we let $V_c$ be the set of \emph{supporters} of $c$, i.e., $V_c=\{v\in V : c\in A_v \}$.
For a candidate $c\in C$, we define $w(c):=w(V_c)$ to be the \emph{approval weight} of $c$.
Extending this notation, for a set $C'\subseteq C$ of candidates their approval weight $w(C'):=\sum_{v\in V:  A_v\cap C'\neq \emptyset} w(v)$ is the summed voting weight of voters approving at least one candidate from $C'$.
The average \emph{satisfaction} of a voter group $V'\subseteq V$ with a committee $W$ is $\frac{1}{w(V')}\sum_{v\in V'} w(v)\cdot |W\cap A_v|$.
Given a committee $W\subseteq C$, for each $\ell\in [k]$, and each non-selected candidate $c\in C\setminus W$, 
an \emph{$\ell$-supporting group (of $c$)}  is a subset of $c$'s supporters $V'\subseteq V_c$ with $w(V') \geq \frac{\ell}{k}$. 

When discussing the legitimacy of a committee $W \subseteq C$, we often use \emph{vote assignments} $\alpha: V \times W \rightarrow [0,1]$. The idea is that any candidate $c \in W$ needs to be backed by voters supporting them; however, voters should not be counted multiple times across different candidates. This leads to the following constraints of a vote assignment: (i) for any $v \in V$ and $c \in W$, $\alpha(v,c) >0$ implies that $c \in A_v$, and (ii) for all $v\in V$ it holds that $\sum_{c \in A_v \cap W} \alpha(v,c) \leq w(v)$. Regarding the second constraint, we often additionally assume (implicitly or explicitly) that $\sum_{c \in A_v \cap W} \alpha(v,c) = w(v)$ as long as $A_v \cap W \neq \emptyset$. For a candidate $c \in W$, we are interested in its \emph{backing weight} (w.r.t. $\alpha$), defined by $\sum_{v \in V_c} \alpha(v,c)$.

\newcommand{\corr}[1]{\tikz{\pgfmathsetmacro{\myperc}{1.05*(#1-250)+10}\node [transform shape, rounded corners=1pt,fill=blue!\myperc,inner sep=0, minimum width=22pt, minimum height=8pt] {#1}; }}

An \emph{ABC voting rule} takes as input an election $E=(C,V,A,w,k)$ and outputs a size-$k$ committee.
Below, we define the six rules that we study in this paper; for all six of them, the adaptation to weighted votes is straightforward. While the formulations of the rules allow for ties, we use lexicographic tie breaking whenever necessary, i.e., if two candidates can be added to the committee, we add the one with the smaller index.

\paragraph{Approval Voting (AV)} AV selects $k$ candidates with highest approval weight. 

\paragraph{Satisfaction Approval Voting (SAV)} 
The rule was introduced by \citet{BrKi14a}.
Under SAV, the score of a candidate $c\in C$ is defined by $\sum_{v\in V_c}  \frac{w(v)}{|A_v|}$. The rule then selects $k$ candidates with the highest score.

\smallskip
The next four rules work in a sequential fashion, i.e., they start with the empty committee and add candidates one by one until $k$ candidates have been selected. We often informally refer to them as the \emph{proportional rules}. 

\paragraph{Sequential Proportional Approval Voting (seq-PAV)}\phantom{x}
The rule was introduced by \citet{Thie95a} and analyzed in detail by \citet{Jans16a} and \citet{ABC+16a}. The \emph{PAV score} of a committee $W$ is $\mathrm{sc}(W)=\sum_{v\in V} w(v)\cdot (\sum_{i=1}^{|A_v\cap W|} \frac{1}{i})$.
The rule starts with the empty committee and in each round adds a candidate with maximum marginal contribution, i.e., $\argmax_{c\in C} \mathrm{sc}(W\cup \{c\})- \mathrm{sc}(W)$.

\paragraph{Sequential Phragmén (seq-Phragmén)} The rule was introduced by \citet{Phra94a} and analyzed in detail by \citet{Jans16a} and \citet{BFJL16a}. 
In this rule, voters earn virtual money at a constant speed proportional to their voting weight.
At the beginning, each voter has no money. As soon as the supporters of a candidate jointly own $1$ unit of money, the candidate is added to the committee and the money of all its supporters is reset to zero. 
All other voters keep their money and the process is repeated $k$ times. 

\paragraph{Method of Equal Shares (MES)} The rule was introduced by \citet{PeSk20b}.
At the beginning, each voter $v\in V$ has a budget $b_v$ of $k\cdot w(v)$ units of virtual money. 
A candidate $c\in C$ is called $q$-affordable for some $q\in [0,1]$ if the supporters of $c$ can together pay one unit of virtual money without any of them paying more than $q$ units, i.e., $\sum_{v\in V_c} \min(b_v,q)\geq 1$. 
In each round, a not yet selected candidate $c$ which is $q$-afforable for a minimum value of $q$ is added to the committee and for each voter $v\in V_c$ we set $b_v=\max(0,b_v-q)$. 
MES might produce outcomes containing less than $k$ candidates, in which case we complete the outcome by running seq-Phragmén. In this case, the budget of a voter at the end of the execution of MES is their  starting budget for seq-Phragmén. 

\paragraph{Phragmms}
The rule was introduced by \citet{CeSt21a} as a combination of seq-Phragmén and the maximin support method \cite{SFFB22b}.\footnote{We do not include the maximin support method in our analysis due to its prohibitive computational complexity.}
The overarching goal of the rule is to select a committee $W$ together with a vote assignment $\alpha$ that guarantees high backing weight for any candidate in $W$. The rule repeats the following two steps iteratively: (i) Given $(W,\alpha)$, compute a score for any candidate in $c \in C\setminus W$ that corresponds to the highest backing weight $t$ that can be given to $c$ without decreasing the backing weight of any candidate in $W$ below $t$ while only doing local changes to the vote assignment. 
Then, add the candidate with highest score to $W$. (ii) Compute a new vote assignment for $W$ that is \emph{balanced}.\footnote{A balanced vote assignment maximizes the sum of the backing weights of the candidates, while minimizing the sum of squared backing weights. For details, see \citet{CeSt21a}.} 
\medskip

We do not include the maximin support method in our analysis due to its prohibitive computational complexity.

\subsection{Application Background}
\label{sec:polkadot}

The Polkadot blockchain \citep{pollWhite,polkadot-overview} 
implements a variation of Proof-of-Stake (PoS) as its consensus mechanism to determine the addition of new blocks to the blockchain.
These systems rely on a restricted set of \textit{validators} who 
are granted the exclusive privilege to append new blocks, notably avoiding the reliance on energy-intensive computing power which characterizes Proof-of-Work (PoW) systems, thus making it an environmentally friendly alternative.
This is different from Proof-of-Work blockchains such as Bitcoin \citep{nakamoto2008bitcoin}, where everyone can propose new blocks. 
For the integrity of PoS networks, it is vital that validators adhere to established rules when creating new blocks.
Crucially, the network remains secure as long as fewer than one-third of these validators behave maliciously \citep{lamport2019byzantine}. 

Polkadot operates as a permissionless network, which means that everyone can become a validator candidate. To address the arising selection problem, the network allows token holders to act as voters, referred to as \textit{nominators}, and screen and evaluate the available candidates, which is a generally quite challenging task \citep{GMGK23a}.
In particular, each token holder can submit a ballot approving up to $16$ validator candidates. 
Aggregating these casted ballots, a committee of $297$ active validators is selected in each era (day) by Polkadot's \textit{Nominated Proof-of-Stake (NPoS)} election algorithm, which uses the seq-Phragmén rule. 

Contrasting with other ABC elections, elections held on the Polkadot network are characterized by three unique aspects: Firstly, a voter's voting weight corresponds to their \textit{stake}, that is, the aggregate of tokens in their possession. Secondly, if a voter endorses a candidate who subsequently gets selected, the voter's stake is held as collateral. If this chosen candidate breaches protocol\,---\,for instance, by proposing blocks that are against the rules\,---\,the voter's stake may be seized, an action known as \textit{slashing} (see \Cref{sec:over}).
Lastly, a voter continuously receives rewards (in the form of network tokens) for approving candidates who have been elected and diligently perform their duties. These latter two characteristics align the economic incentives of voters with the network's interests, ensuring that voters include only candidates regarded as trustworthy on their ballot.
 
In the following sections, we discuss several desiderata that the Polkadot designers 
formulated for their voting rule, including underrepresentation (\Cref{sec:under}) and overrepresentation (\Cref{sec:over}) concerns. There are also several other desiderata that are beyond the scope of the paper, such as the running time of the algorithm and verification concerns \citep{CeSt21a}.

\section{A First View on Instances and 
Committees}\label{sec:description}

We start by describing characteristics of our collected dataset and analyzing the overlap between committees that are selected by the voting rules mentioned in \Cref{sec:abc}. 

\subsection{Description of the Data} In Polkadot, every day is viewed as an \emph{era}, and one election is conducted at the end of each era, implying that our $496$ collected elections have a time-based ordering.\footnote{
This also makes the data suitable for testing models that cover collective decision making over time  \citep{DBLP:conf/aaai/Lackner20,DBLP:conf/atal/BoehmerN21}. 
Currently, such data is very rare even in the context of ordinal single-winner elections \citep{trendspreflib,DBLP:conf/atal/BoehmerS23}.}
In particular, the studied elections cover eras $398$ (July 5, 2021) through $1078$ (May 16, 2023) and are generated from openly accessible data maintained and distributed by the \textit{Web3 Foundation}. 
However, the data stream contains some gaps, i.e., in $185$ eras from the above interval, elections were not correctly stored. 
The committee size in the elections is $300$ and voters are only allowed to approve of $16$ candidates, i.e., 
$k=300$ and $|A_v|\le 16$ for all $v \in V$.\footnote{The actual committee size currently used by Polkadot is 297. We use $k=300$ for easier readability of our results.}
Notably, it is possible to map voters and candidates of different elections to each other, as they all provide unique IDs. 

In \Cref{fig:basic1}, we present the sizes of the collected elections.
Each election contains between \num{18202} and \num{48025} voters and between \num{920} and \num{1080} candidates.
Thus, the number of voters exhibited a stronger fluctuation over time than the number of candidates. 
Nevertheless, we see here that neither of the two parameters changes too much from one era to the next (the few ``jumps'' in the plots are mostly due to missing elections). 
In terms of the  average number of candidates a voter approves, there is a monotonic decrease from around $9.7$ to around $7.5$ over time. 

\begin{figure}
     \centering
   \includegraphics[width=6cm]{./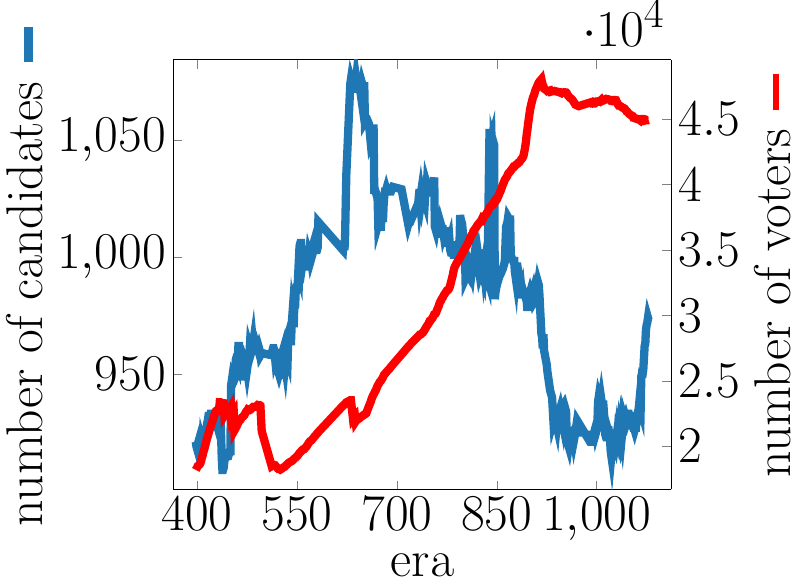}
  \caption{Number of candidates (blue) and voters (red) in our $496$ elections.}
\label{fig:basic1}
\end{figure}

\begin{table*}[t]
    \centering
\begin{tabular}{lcccccc}
\toprule
{} &       AV &      SAV &   seq-PAV &  Phragmms &  seq-Phrag. &      MES \\
\midrule
AV          &    $-$ &  266.190 &  263.583 &   266.772 &     262.595 &  262.163 \\
SAV         &  266.190 &    $-$ & 282.042 &   278.677 &      276.933 &  277.206 \\
seq-PAV      &  263.583 &  282.042 &    $-$ &   286.562 &      288.317 &  288.119 \\
Phragmms    &  266.772 &  278.677 &  286.562 &     $-$ &      288.091 &  286.887 \\
seq-Phrag. &  262.595 &  276.933 & 288.317 &   288.091 &        $-$ &  295.123 \\
MES        &  262.163 &  277.206 &  288.119 &   286.887 &      295.123 &    $-$ \\
\bottomrule
\end{tabular}
    \caption{Average overlap between committees returned by different voting rules. The committee size is $300$.}
    \label{tab:overlap}
\end{table*}

\subsubsection{How Much do Elections Change over Time?}\label{sub:datTime}

We take a closer look at how much elections change over time. 
For this, we make use of the fact that in Polkadot each voter and each candidate is identified by a stash address, which typically does not change over time. 
Accordingly, voters and candidates have the same identity in different elections. 

We examine four different quantities comparing two elections $E=(C,V,A,w,k)$ and $E'=(C',V',A',w',k)$. For each of them, the smaller the value, the smaller the change.
\begin{description}
    \item[Relative voter set change] This quantity captures how much the set of voters changed between the two elections: $\frac{|V\triangle V'|}{|V|+|V'|}$.
    \item[Relative weight change] This quantity captures how much the weight of voters that are present in both elections changes: $\frac{\sum_{v\in V\cap V'} |w(v)-w'(v)|}{\sum_{v\in V\cap V'} w(v)}$
    \item[Relative opinion change] This quantity captures how much the voters that are present in both elections change their opinions about the candidates that are also present in both elections. We take the average of the following expression over all voters $v\in V'\cap V$ (for two subsets of candidates: $\frac{|(A_v\cap C')\triangle (A'_v \cap C)|}{|(A_v\cap C')|+|(A'_v \cap C)|}$.
    \item[Relative candidate set change] This quantity captures how much the set of candidates changes: $\frac{|C\triangle C'|}{|C|+|C'|}$.
\end{description}

\begin{figure*}[t!]
\centering
\begin{subfigure}[b]{.24\textwidth}
  \centering
  \includegraphics[width=\textwidth]{./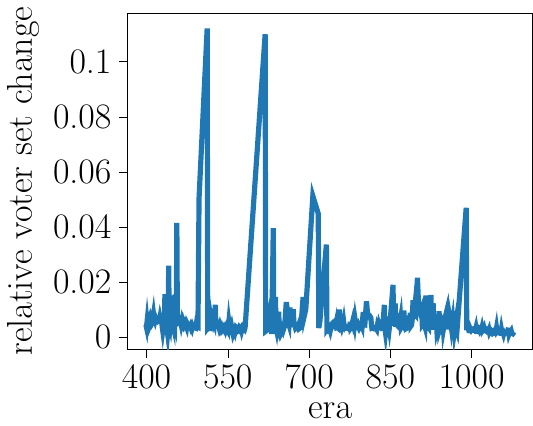}
  \caption{Relative voter set change.}
\end{subfigure}\hfill
\begin{subfigure}[b]{.24\textwidth}
  \centering
  \includegraphics[width=\textwidth]{./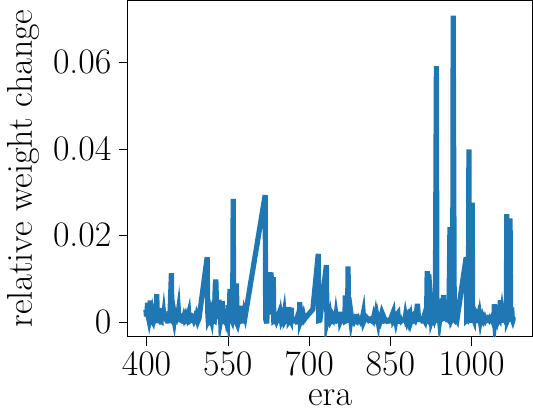}
  \caption{Relative weight change.}
\end{subfigure}\hfill
\begin{subfigure}[b]{.24\textwidth}
  \centering
  \includegraphics[width=\textwidth]{./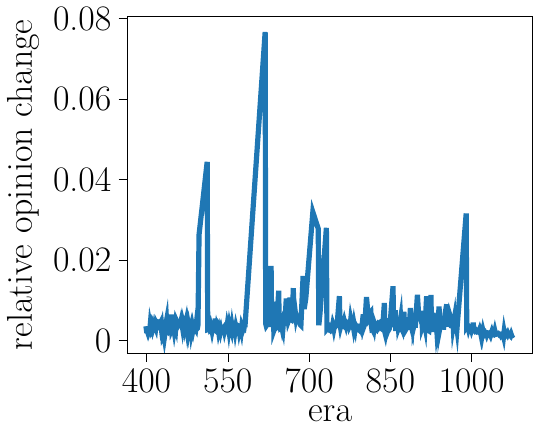}
  \caption{Relative opinion change.}
\end{subfigure} \hfill
\begin{subfigure}[b]{.24\textwidth}
  \centering
  \includegraphics[width=\textwidth]{./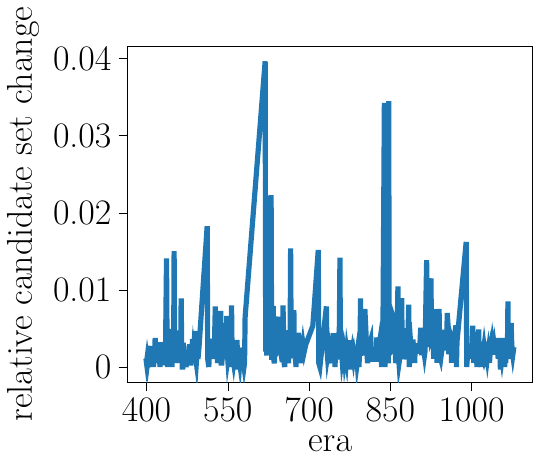}
  \caption{Relative candidate set change.}
\end{subfigure}
\caption{Some statistics capturing by how much elections change from one era to the next. }
\label{fig:basic}
\end{figure*}

\begin{figure*}[t]
    \centering
    \resizebox{0.33\textwidth}{!}{\input{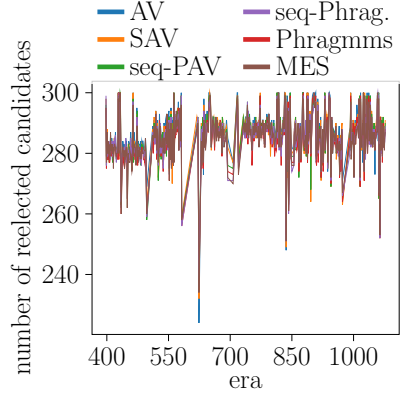}}
    \caption{Number of candidates that are also part of the winning committee in the next election.}
    \label{fig:changes}
\end{figure*}

For each of these four quantities, \Cref{fig:basic} depicts the change between two consecutive elections in our dataset. 
The picture for all of them is similar: Going from one election to the next, typically no large change occurs, which is also quite intuitive as without any active actions from participants, nothing will change.
Note that most of the spikes visible in \Cref{fig:basic} are due to gaps in our data. 
However, comparing the first election in our dataset to the last one, indeed quite some changes occurred: the relative voter set change is $0.76$, the relative candidate set change is $0.39$, the relative weight change is $0.4$, and the relative opinion change is $0.46$. 

\subsubsection{How Much do Committees Change Over Time?}\label{sub:comTime}

In \Cref{fig:changes}, we show the overlap between winning committees in two successive elections (recall that we can map candidates from different elections to each other by their stash address). In general, we observe that, in most cases, only around $20$ candidates get replaced when moving from one election to the next (the average value for all rules is between $13$ and $15$). 
Recalling that we have seen in \Cref{fig:basic} that elections change only marginally in many steps, this might still be viewed as surprisingly many changes (especially, recalling that the difference between the outcomes produced by the proportional rules on the same election is lower).
Beyond that, it is worth noting that the graphs show some eras with significantly smaller overlaps. 
Again, most of these are due to the gaps in our data, which naturally lead to  smaller overlap, as a larger period of time has passed between one election and its successors. 

Another trend that stands out from \Cref{fig:changes} is that the rules behave remarkably similarly, which means that all of them make roughly the same number of adjustments to account for changes in the election.  
This observation stands in partial contrast to previous theoretical and empirical results from the literature that different rules have a different sensitivity to random or adversarial changes in the election \citep{DBLP:journals/ai/BredereckFKNST21,BFJK23a,DBLP:conf/eaamo/BoehmerBFN22}. 
Yet, voters in Polkadot elections certainly do not change their votes randomly.

\subsubsection{Election Properties}\label{sub:prop}
In the following, we describe some structural properties of the Polkadot elections.
The plot in \Cref{fig:basic2} shows the averaged order statistics of the weight of voters: 
For each election, we sort the voter weights decreasingly and depict the averaged values here (we cut off the plot at the minimum number of voters in our elections). It stands out that the weight of voters is distributed very unevenly. On average, the top-$31$ voters combined have more weight than the remaining voters together (these powerful voters are often called ``whales'').
In \Cref{fig:basic3}, we repeat this analysis with the approval weight of candidates. 
Here, again, the distribution is far from uniform, yet slightly less imbalanced than for the voting weights: Some candidates have a substantially higher approval weight than others; however, typically around $300$ candidates have an approval weight of at least $0.01$. This mismatch between  \Cref{fig:basic2} and \Cref{fig:basic3} can be explained by the fact that voters can approve multiple candidates and that voters with high weights often have disjoint approval sets (voters with a high weight typically represent a company or organization which also adds some candidates in the election; accordingly each company lets their voters vote for their own candidates in order to ensure that they get selected and increase their monetary rewards). 

\begin{figure*}[t]
    \centering
  \begin{subfigure}[b]{.49\textwidth}
  \centering
  \resizebox{0.7\textwidth}{!}{\input{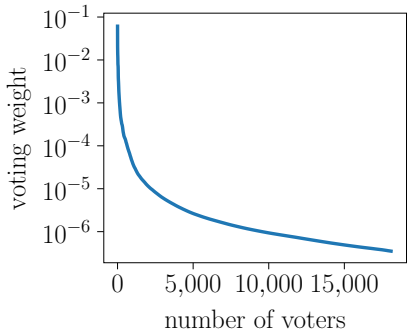}}
  \caption{Distribution of voting weight.}
  \label{fig:basic2}
\end{subfigure}\hfill
\begin{subfigure}[b]{.5\textwidth}
  \centering
  \resizebox{0.7\textwidth}{!}{\input{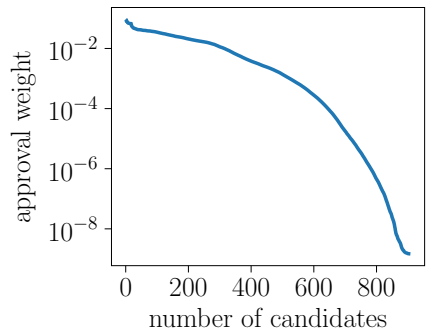}}
  \caption{Distribution: Candidate's approval~weight.}
  \label{fig:basic3}
\end{subfigure}
\caption{Some statistics regarding the $496$ collected Polkadot elections. The $y$-axes are logarithmic. }
\end{figure*}

\subsection{Overlap Between Committees}
In \Cref{tab:overlap}, we show the similarity between different voting rules in terms of the average overlap of their computed committees. 
We observe a high consensus of all rules, thereby confirming previous findings on synthetic data \citep{reichertElkind,mapofRules}; in fact, across all elections, two rules never disagree on more than $51$ out of the $k=300$ candidates. 
However, there are some differences: 
The four considered proportional rules 
have a particularly high agreement. 
This is most extreme for seq-Phragmén and MES, which have an average overlap of $295$ and never produce outcomes that differ in more than $11$ candidates. 
In contrast, AV returns committees that are furthest away from the outcomes of the other rules, while outcomes produced by SAV are in  general slightly closer to the ones returned by the proportional rules. 
Moreover, in Appendix~\ref{sub:comTime}, we analyze how much winning committees change over time and observe that, with respect to this aspect,  all our rules behave remarkably similarly to each other. In most cases, winning committees in successive time steps differ by at most $20$ candidates. 

\section{Preventing Underrepresentation}
\label{sec:under}
In Polkadot, voters are financially rewarded when (some of) their approved candidates appear in the selected committee.
Thus, it is important to ensure that no voter groups are underrepresented in the selected committee, as 
otherwise voters might feel disengaged and stop participating in the protocol. 
In general, unrepresentative outcomes can also lead to power centralization, which is very dangerous for the system.
In voting theory, underrepresentation is typically prevented by requiring proportional representation. 
Intuitively, this means that a group of voters with a summed voting weight of $\frac{\ell}{k}$ should be allowed to select $\ell$ of the committee members. 
In this section, we consider different forms of measuring the representation (and satisfaction) of voters. 

\begin{table}[t]
    \centering
  \resizebox{\textwidth}{!}{   
  \begin{tabular}{lccccc}
\toprule
{} &   PAV score 		&  JR violations  			&  EJR+	violations	  & priceability violations & priceability  gap \\
\midrule
AV          &     2.502 	&       28.347 &   32.343   &                 124.341 &              6.21 \\
SAV         &      2.547 	&       25.143 &   32.321   &                 116.692 &              7.14 \\
seq-PAV     &      2.585 	&        0 &    0  &                 12.127 &              0.53  \\
Phragmms    &    2.578 		&        0 (Guarantee) &    0  &                 0.000 &             -0.12 \\
seq-Phrag. 	&    2.578 		&        0 (Guarantee) &    0 	&                 0.000 &             -0.16 \\
MES        	&    2.579 		&        0 (Guarantee) &    0 (Guarantee)   &                 0.000 &             -0.17 \\
\bottomrule
\end{tabular}  
}
\caption{Measures related to underrepresentation. Entries marked with a ``(Guarantee)'' are guaranteed to be zero.}
    \label{tab:underrep}
\end{table}

\subsubsection{PAV Score}
We start with the PAV score of the computed committees (see \Cref{tab:underrep}), which is usually regarded as a simple proportionality measure. 
Unsurprisingly, seq-PAV, which greedily optimizes this value in a sequential fashion, performs best. 
The other three proportional rules all have a very similar performance and perform only marginally worse than seq-PAV (i.e., by about $0.2\%$), while still outperforming AV and~SAV. 

\subsubsection{JR and EJR+}
In addition, proportionality is typically judged in terms of binary proportionality axioms, which are often defined in terms of cohesive groups, i.e., subgroups of voters of a certain size that jointly approve some number of candidates. However, because the respective notions tend to be computationally intractable to check, we follow the approach of \citet{BrPe23a} and focus on  notions regarding non-selected candidates instead. 
The intuition here is as follows: if a non-selected candidate is approved by ``many'' voters who are currently ``underrepresented'' in the committee, then the candidate should have been added to the committee. 
Two axioms that implement this general intuition are JR \citep{ABC+16a} and EJR+ \citep{BrPe23a}.\footnote{EJR+ implies JR as well as 
other established proportionality notions such as  
EJR \citep{ABC+16a}, PJR \citep{SFF+17a}, and IPSC \citep{AzLe21a}.}

\begin{definition}
    For a given election $E=(C,V,A,w,k)$ and committee $W\subseteq C$, a non-selected candidate $c\in C\setminus W$ violates 
    EJR+ 
    if there is an $\ell$-supporting group $V'$ of $c$ 
    such that all voters from $V'$ approve less than $\ell$ candidates from~$W$.
    If $c$ violates EJR+ for $\ell=1$, $c$  violates  
    JR. 
\end{definition}
MES is guaranteed to output committees satisfying  EJR+, while committees returned by Phragmms and seq-Phragmén always satisfy JR but may fail EJR+, and the other rules fail even JR \citep{BrPe23a,CeSt21a,LaSk22a}.
However, in our instances the behavior of the rules is quite different: 
All four proportional rules return committees satisfying EJR+ for all tested instances.
In contrast, AV violates JR and EJR+ on all instances, while
SAV violates JR in all but $36$ instances and EJR+ in all but $26$ instances.
To get a more differentiated view, in \Cref{tab:underrep} we present the average number of non-selected candidates violating JR/EJR+ in our elections. 
Given the reported numbers, one can conclude that the committees returned by AV and SAV are typically quite far away from satisfying the two axioms. 
Moreover, as for both rules the number of candidates violating JR and EJR+ are quite similar to each other, it follows that if a candidate violates EJR+, then they often violate JR as well.

\subsubsection{Average Satisfaction $\ell$-Supporting Groups}
EJR+ and JR only check for $\ell$-supporting groups in which \emph{all} voters are unsatisfied with the current assignment. 
Weakening this approach, one can also search for $\ell$-supporting groups with low \emph{average} satisfaction.
This view is reflected in the notion of representativeness recently introduced by \citet{BrPe23a}, which is related to the proportionality degree \citep{Skow21a}. In fact, \citet{BrPe23a} proved that for each rule satisfying EJR+ (such as MES), each $\ell$-supporting group is guaranteed to have an average satisfaction of at least $\frac{\ell-1}{2}$ with the committee.
In each election, we compute for each $\ell\in [k]$, the $\ell$-supporting group that has the lowest average weighed satisfaction. 
In \Cref{fig:representation3}, we show this value for varying $\ell\in [k]$ (averaged over all elections where an $\ell$-supporting group exists). 
On average, all four proportional rules produce outcomes in which all $\ell$-supporting groups have an average satisfaction clearly above $\ell-1$. Thus, they outperform their known worst-case guarantees from the literature, and even consistently outperform the best possible guarantee of $\ell-1$ \citep{BrPe23a}. 
Interestingly, seq-PAV performs slightly better than the other proportional rules. 
SAV and AV perform substantially worse, yet still acceptable for $\ell\ge 4$.

\begin{figure}[t]
\centering
   \resizebox{0.33\textwidth}{!}{
\begin{tikzpicture}[every plot/.append style={line width=2.2pt}]

\definecolor{crimson2143940}{RGB}{214,39,40}
\definecolor{darkgray176}{RGB}{176,176,176}
\definecolor{darkorange25512714}{RGB}{255,127,14}
\definecolor{forestgreen4416044}{RGB}{44,160,44}
\definecolor{mediumpurple148103189}{RGB}{148,103,189}
\definecolor{orchid227119194}{RGB}{227,119,194}
\definecolor{sienna1408675}{RGB}{140,86,75}
\definecolor{steelblue31119180}{RGB}{31,119,180}

\begin{axis}[
legend columns=2, 
legend cell align={left},
legend style={
  fill opacity=0.8,
  draw opacity=1,
  draw=none,
  text opacity=1,
  at={(0.5,1.35)},
  line width=3pt,
  anchor=north,
   /tikz/column 2/.style={
  	column sep=10pt,
  }, font=\LARGE
},
legend entries={AV,
	seq-Phrag.,
	SAV, Phragmms,
	seq-PAV, MES},
tick align=outside,
tick pos=left,
x grid style={darkgray176},
xlabel={$\ell$-supporting groups},
xmin=0.4, xmax=13.6,
xtick style={color=black},
y grid style={darkgray176},
ylabel={minimum average satisfaction},
ymin=-0.749999973283545, ymax=15.7499994389544,
ytick style={color=black},every tick label/.append style={font=\LARGE}, 
label style={font=\LARGE},
xticklabels={$2$,$4$,$6$,$8$,$10$,$12$,$14$},
xtick={2,4,6,8,10,12,14}
]
\addlegendimage{steelblue31119180}
\addlegendimage{mediumpurple148103189}
\addlegendimage{darkorange25512714}
\addlegendimage{crimson2143940}
\addlegendimage{forestgreen4416044}
\addlegendimage{sienna1408675}
\addplot [semithick, steelblue31119180, mark=x, mark size=3, mark options={solid}]
table {%
1 0
2 0
3 1.23065499284617
4 2.14897086737261
};
\addplot [semithick, darkorange25512714, mark=x, mark size=3, mark options={solid}]
table {%
1 0.0113899267737008
2 0.285268768611994
3 0.882560833919876
4 3.290755548152
5 4.35905006635215
6 4.80653877178278
7 7.32698244496995
8 10.6874509404321
};
\addplot [semithick, forestgreen4416044, mark=x, mark size=3, mark options={solid}]
table {%
1 1.09156130282997
2 2.91281020527273
3 4.8239650591232
4 6.01557613392188
5 7.44233581631182
6 9.2885679465521
7 10.7132663751508
8 13.0941963478402
9 13.8895691354364
10 14.1237467106078
11 14.5263152637431
};
\addplot [semithick, crimson2143940, mark=x, mark size=3, mark options={solid}]
table {%
1 0.958833766122003
2 2.06796526734476
3 3.81352195416874
4 5.16941713380989
5 6.84004232038851
6 8.84769193393223
7 10.22817645592
8 12.4992139813201
9 12.6932571445077
10 13.5096763667421
11 13.7217185241465
12 14.301369008083
13 14.9999994656709
};
\addplot [semithick, mediumpurple148103189, mark=x, mark size=3, mark options={solid}]
table {%
1 0.990541724045681
2 2.27377627576019
3 4.48339900662595
4 5.45529017283971
5 6.67802138053512
6 8.18023379834775
7 9.9675174182658
8 12.4111648400482
9 12.9368879205209
10 13.4684085218079
11 13.6995403337828
12 14.3149988691435
13 14.9999994656709
};
\addplot [semithick, sienna1408675, mark=x, mark size=3, mark options={solid}]
table {%
1 0.980355319797568
2 2.32355847663547
3 4.48276043826275
4 5.47659209309329
5 6.7734162804944
6 8.19974835512682
7 10.0320707870978
8 12.346833842223
9 12.8540221867216
10 13.437851877755
11 13.619084026677
12 14.3575119692276
13 14.9999994656709
};
\addplot [line width=1.5pt, black,dashed]
table {%
1 0
2 1
3 2
4 3
5 4
6 5
7 6
8 7
9 8
10 9
11 10
12 11
13 12
};
\end{axis}

\end{tikzpicture}}
  \captionof{figure}{Minimum average satisfaction of $\ell$-supporting groups. The dashed line is the function $f(\ell)=\ell-1$. Lines stop in case no $\ell$-supporting group of this size exists. }
\label{fig:representation3} 
\end{figure}
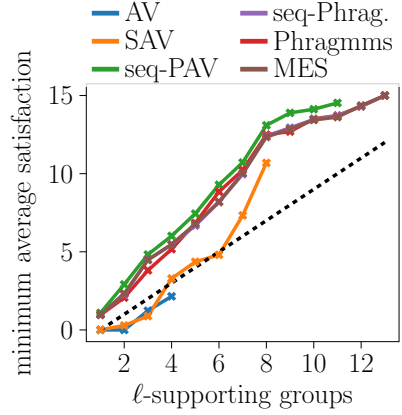

\subsubsection{Priceability}
Complementing the above analysis,  we propose a quantitative measure based on the notion of \textit{priceability} \citep{PeSk20b}, which checks whether voters have an equal influence on the outcome:

\begin{definition}[\citealp{PeSk20b}]
    A committee $W\subseteq C$ is priceable in a given election $E=(C,V,A,w,k)$ if there exists a price system $(p,f)$ with price $p\in [0,1]$ and a cost assignment $f:V\times W \mapsto [0,1]$ such that:
    \begin{enumerate}
        \item $\sum_{c\in W} f(v,c)\leq w(v)$ for all $v\in V$,
        \item $f(v,c)=0$ for all $v\in V$ and $c\in W \setminus A_v$,
        \item $\sum_{v\in V} f(v,c)=p$ for all $c\in W$, and 
        \item $\sum_{v\in V_c} (w(v)-\sum_{c\in W} f(v,c))\leq p$ for all $c\in C \setminus W$. 
    \end{enumerate}
\end{definition}
Seq-Phragmén, Phragmms, and MES always return priceable committees \citep{PeSk20b,CeSt21a}. 
A natural way to turn this axiom into a quantitative measure is to compute the price system that minimizes the \emph{priceability gap}, i.e., the gap between the determined price and the maximum  spare money of the supporters of a non-selected candidate, i.e., 
\[\min_{c\in C\setminus W} \sum_{v\in V_c} (w(v)-\sum_{c\in W} f(v,c))-p.\] 
This captures by how much the candidate is exceeding or missing to be affordable by their supporters. 
Using an LP, we computed the price system minimizing this value.
In the second-to-last and last columns of \Cref{app:rep_app}, we show the average number of candidates whose supporters' money exceeds the set price and the average priceability gap (normalized by the set price), respectively. 
Moreover, in \Cref{fig:representation1} we show the average normalized money by which candidate's supporters exceed the set price  (where we first sort candidates decreasingly by this value and then take the average).

SAV and AV return committees that are far away from being priceable: In all elections, they returned committees violating priceability with an average priceability gap of $6.21$ and $7.14$ and typically around $100$ candidates whose supporters own more than the set price: In fact, in more than $75\%$ of the elections, there are non-selected candidates whose supporter's money exceeds the price by more than $600\%$.
This is a clear indicator that certain voters are much better represented and had a much larger influence on the returned outcome than others in the selected committee. 

For seq-PAV, the picture is mixed: The rule returns priceable outcomes in $327$ of the elections, yet in case the returned outcomes are not priceable, the priceability gap is sometimes as high as $4$, which leads to an average priceability gap of $0.53$ and on average $12$ candidates whose supporters exceed the set price. 

For the other rules, it is guaranteed that the returned committees are priceable.
Accordingly, their priceability gap is negative. 
In fact, there is a small difference between these rules: For MES and seq-Phragmén non-selected candidates are slightly further away from being affordable by their supporters than for Phragmms. 

\section{Preventing Overrepresentation}
\label{sec:over} 

One of the major concerns of blockchain designers is the security of the chain: If a certain fraction of participants collude and together execute some malicious action, they can seize control over the chain and threaten the integrity of the whole system. To protect against such attacks in Nominated Proof-of-Stake, it is vital to ensure that groups of candidates can only get selected if their joint set of supporters has a sufficient stake. In other words, we want to prevent overrepresentation \citep{CeSt21a}.
\footnote{The critical threshold of elected malicious candidates depends on the type of consensus and is $\frac{k}{3}$ in Byzantine fault-tolerant consensus \citep{DBLP:journals/jacm/PeaseSL80}, as used in Polkadot, and $\frac{k}{2}$ in Nakamoto consensus \citep{cryptoeprint:2018/400}. 
However, already a small number of malicious agents might pose certain inconveniences for the system \citep{CeSt21a}.}
In the following, we explore three perspectives on overrepresentation. 

\subsubsection{Minimum Approval Weight}
The first measure we consider was introduced by \citet{CeSt21a}. In order to make it as difficult as possible for an attacker to get $\ell$ committee members selected, they propose to maximize $\min_{W'\subseteq W: |W'|=\ell} w(W')$, i.e., the minimum approval weight of a group of $\ell$ committee members.\footnote{For $\ell=1$, this value is the minimum approval weight of a selected candidate and is maximized by AV (see \Cref{tab:overrep}).} 
\Cref{fig:overrep2} depicts these values for a varying value of~$\ell$.\footnote{
Computing these values is NP-hard, which is why we resorted to an ILP (see Appendix~\ref{app:minimum_support} for details). 
As solving a single instance of the ILP took sometimes more than one day, 
in \Cref{fig:overrep2} we only averaged over $15$ instances, uniformly spaced over our dataset, and considered values of $\ell$ from $\{1,15,30,45,\dots, 300\}$.}
For all examined rules, the values are generally quite high and close to the dashed line (which corresponds to the function $\frac{1}{300}x$). 
This is reassuring, as it means that groups of candidates in the selected committee are also backed by an appropriate amount of stake. 
Considering the differences between the rules, we see that the four proportional rules perform best (for $\ell\geq 15$), with seq-PAV performing slightly worse than the other three. AV performs worst; yet, the generally small difference between AV and the proportional rules is quite remarkable given that we have seen in \Cref{sec:under} that AV tends to underrepresent voter groups (and thus runs the risk of overrepresenting others).

\begin{table}[]
    \centering
    \begin{tabular}{llllllr}
\toprule
{} 			   &   \multirowcell{2}[0pt][l]{maximin\\support\\value} &	\multirowcell{2}[0pt][l]{min. appr.\\weight of\\winner} 	& \multicolumn{3}{c}{cost of ``replacing'' $\ell$} \\
\cmidrule{4-6   }
{} 			      &   		   		&  		&  	$\ell=1$ & $\ell=\frac{k}3$ & $\ell=\frac{k}2$ \\
\addlinespace[0.5em] \midrule
AV          &           0.0015 &         0.0110                &        0.0110 &0.14 &       0.27   \\
SAV         &           0.0018 &         0.0024                &       0.0015 &  0.24&       0.40              \\
seq-PAV      &           0.0024 &         0.0028                &     ? & ? &     ?  \\
Phragmms    &           0.00272 &         0.0028                &       0.0027 & 0.40 &      0.79   \\
seq-Phrag. &           0.00270 &         0.0028                &       0.0027 &  0.40  &    0.79  \\
MES         &           0.00269 &         0.0028                &           ? &           ? &                ?  \\
\bottomrule
\end{tabular}
    \caption{Measures related to overrepresentation.} 
\label{tab:overrep}
\end{table}

\subsubsection{Maxminim Support Value}
As an aggregate version of this measure, \citet{CeSt21a} also proposed to consider the minimum \textit{average} approval weight of a committee member, where the minimum is taken over groups of different sizes, i.e., 
$\min_{\substack{W'\subseteq W}} \frac{1}{|W'|}w(W')$. Interestingly, \citet{CeSt21a} proved that this value is equivalent to the \textit{maximin support (MMS) value}, which was introduced in a different context (see \Cref{def:mms} below). 
 In \Cref{tab:overrep}, we see that, on average, the four proportional rules achieve  substantially higher MMS values than AV and SAV. 
In particular, for the four proportional rules, the MMS value is very close to the minimum approval weight of a selected candidate (also in \Cref{tab:overrep}), which constitutes a natural upper bound for it. 
Taking a closer look at the four proportional rules, seq-PAV performs slightly worse, while  Phragmms, seq-Phragmén, and MES all produce very similar values. 
Particularly interesting is the comparison between seq-Phragmén and Phragmms: 
The main argument in favor of a potential switch from the former to the latter is the fact that the latter provides a constant-factor approximation guarantee for the MMS value \citep{CeSt21a}. Among the $252$ instances in our dataset where seq-Phragmén and Phragmms select different committees, Phragmms outperforms seq-Phragmén in $209$ cases. (In the remaining $43$ instances, seq-Phragmén wins.) However, the difference between the two rules is always at most $0.0002$ and thus negligible.

\subsubsection{Stake Lost}
For our next measure, we take a closer look at how slashing works in Polkadot. As explained in \Cref{sec:polkadot}, when committee members misbehave, some of their supporters lose stake. To determine the amount of stake a voter loses in this event, the election mechanism in Polkadot not only outputs a committee $W$, but also a vote assignment $\alpha$, 
where $\alpha(v,c)$ specifies how much of a voter $v$'s stake is assigned to committee member $c \in A_v$ (see \Cref{sec:abc}).
Following an approach proposed by \citet{SFFB22b}, this vote assignment is chosen so as to maximize the backing weight 
of the least-backed committee member.

\begin{definition}\label{def:mms}
Given an election $E=(C,V,A,w,k)$ and a committee~$W$, the \emph{maximin support value} of $W$ is given by
$\max_{\alpha} \min_{c\in W} \sum_{v\in V_c} \alpha(v,c)$, 
where the maximum is taken over all possible vote assignments for $W$. A vote assignment maximizing this quantity is called a \emph{maximin assignment}.
\end{definition}

If a committee member $c$ misbehaves in Polkadot, each supporter $v \in V_c$ loses (up to) $\alpha(v,c)$ stake \citep{polkadot-overview}. 
Accordingly, the maximin support value expresses the minimum amount of stake that is slashed if a single member of the committee acts maliciously.
For a given committee, a maximin assignment can be computed via an LP \citep{SFFB22b}. Using this assignment, in \Cref{fig:overrep4} we plot the minimum amount of stake assigned to a group of $\ell$ committee members, for $\ell \in [k]$. This corresponds to the minimum stake that will be slashed if $\ell$ committee members act maliciously. 

\Cref{fig:overrep4} is closely connected to \Cref{fig:overrep2}; the main difference is that in \Cref{fig:overrep4} we assume that there is a predefined split of the stake of supporters onto their approved candidates (as defined by the maximin assignment), whereas in \Cref{fig:overrep2} we consider the full stake of all voters approving at least one candidate from the group. 
Accordingly, in \Cref{fig:overrep4}, the dashed line is an upper bound  (that is achieved if all committee members have the same backing weight). 
The difference between the rules is much more pronounced in \Cref{fig:overrep4}, with the proportional rules performing much better than AV and SAV. 
In particular, for the critical thresholds of $k/3$ and $k/2$, the difference is above $50\%$. 
The proportional rules, in turn, are reasonably close to the ideal line.

\begin{figure*}[t]
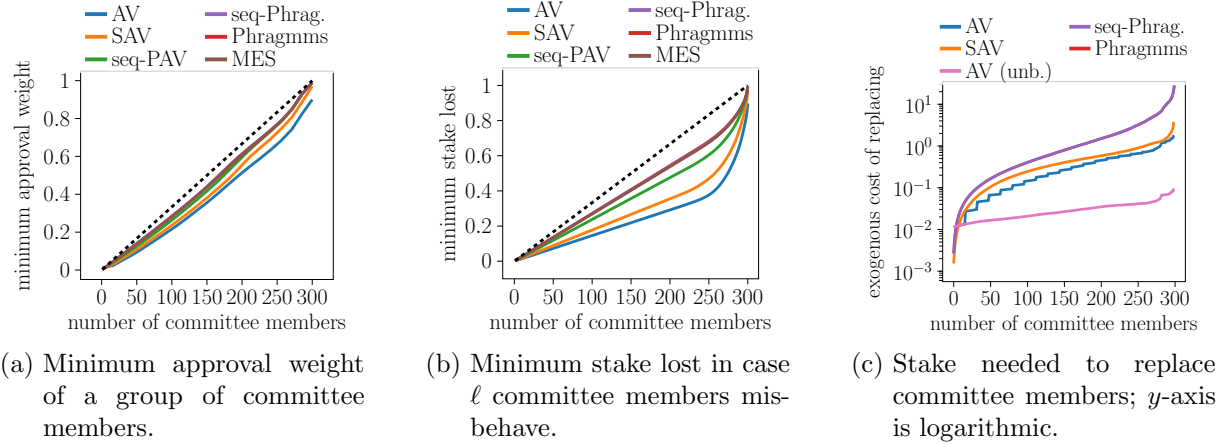

    \centering
\begin{subfigure}[b]{.3\textwidth}
  \centering
  \resizebox{0.95\textwidth}{!}{
\begin{tikzpicture}[every plot/.append style={line width=2.2pt}]

\definecolor{crimson2143940}{RGB}{214,39,40}
\definecolor{darkgray176}{RGB}{176,176,176}
\definecolor{darkorange25512714}{RGB}{255,127,14}
\definecolor{forestgreen4416044}{RGB}{44,160,44}
\definecolor{mediumpurple148103189}{RGB}{148,103,189}
\definecolor{sienna1408675}{RGB}{140,86,75}
\definecolor{steelblue31119180}{RGB}{31,119,180}

\begin{axis}[
legend columns=2, 
legend cell align={left},
legend style={
  fill opacity=0.8,
  draw opacity=1,
  draw=none,
  text opacity=1,
  at={(0.5,1.35)},
  line width=3pt,
  anchor=north,
   /tikz/column 2/.style={
  	column sep=10pt,
  }, font=\LARGE
},
legend entries={AV,
	seq-Phrag.,
	SAV, Phragmms,
	seq-PAV, MES},
tick align=outside,
tick pos=left,
x grid style={darkgray176},
xlabel={number of committee members},
xmin=-30, xmax=330,
ymin=-0.05,ymax=1.05,
xtick style={color=black},
y grid style={darkgray176},
ylabel={minimum approval weight},
ytick style={color=black},every tick label/.append style={font=\LARGE}, 
label style={font=\LARGE},
legend image post style={line width =3pt}
]
\addlegendimage{steelblue31119180}
\addlegendimage{mediumpurple148103189}
\addlegendimage{darkorange25512714}
\addlegendimage{crimson2143940}
\addlegendimage{forestgreen4416044}
\addlegendimage{sienna1408675}
\addplot [semithick, steelblue31119180]
table {%
	1	0.012277353170597
	16	0.023266577614533
	31	0.054286077986586
	46	0.084675087754815
	61	0.120022519027384
	76	0.156486869529888
	91	0.193701510145222
	106	0.23156619064231
	121	0.272427612957695
	136	0.315617562520017
	151	0.358625203775042
	166	0.405145280746251
	181	0.453677903894041
	196	0.501449970234367
	211	0.546442335906129
	226	0.589863304921715
	241	0.634979840731122
	256	0.684124150149424
	271	0.743138867639693
	286	0.827442536942583
 300 0.9
	
};
\addplot [semithick, darkorange25512714]
table {%
1	0.002612506862115
16	0.030998329298331
31	0.064022417905725
46	0.098858234377857
61	0.135745829920589
76	0.173298294995817
91	0.212158807922918
106	0.253029884470554
121	0.294944753648301
136	0.338401282778109
151	0.384258004749105
166	0.431863162518517
181	0.481356206496186
196	0.538142251275657
211	0.597046168760538
226	0.64554078272196
241	0.692919171656262
256	0.746627941592946
271	0.806363906823633
286	0.889589557419398
300 0.97
};
\addplot [semithick, forestgreen4416044]
table {%
1	0.002743052984473
16	0.039519302651988
31	0.07674465687142
46	0.114980839058059
61	0.154699382225555
76	0.195438007469032
91	0.237473154691566
106	0.28086367724042
121	0.326053540200754
136	0.372028852192664
151	0.419962309312388
166	0.470545940752129
181	0.524878182847724
196	0.580636026997596
211	0.635405715242584
226	0.685331984834575
241	0.733699494771056
256	0.784424477850052
271	0.846654862998143
286	0.932427181593293
300 0.99
};
\addplot [semithick, crimson2143940]
table {%
	1	0.002802433425633
	16	0.044187838273836
	31	0.086071124878742
	46	0.128102196980034
	61	0.170849688158115
	76	0.214062715516484
	91	0.257732692099198
	106	0.301982047302902
	121	0.347229718095303
	136	0.393397514413243
	151	0.440929273671944
	166	0.490877415029155
	181	0.544186661201183
	196	0.595509677525469
	211	0.64295896850596
	226	0.688971343054911
	241	0.736176987717759
	256	0.787832296822109
	271	0.848517810746072
	286	0.930761498988192
 300 0.99
};
\addplot [semithick, mediumpurple148103189]
table {%
1	0.002784235774695
16	0.043693377247923
31	0.085350152894
46	0.127315988913862
61	0.169947046628775
76	0.213241809623274
91	0.256945103298775
106	0.301613538914171
121	0.347115494312106
136	0.394439027387915
151	0.44422668241785
166	0.497229500861161
181	0.5497398399197
196	0.601607007599228
211	0.647360906782731
226	0.692745395938793
241	0.739435492170348
256	0.789556828335065
271	0.849683427399865
286	0.932417756144607
300 0.99
};
\addplot [semithick, sienna1408675]
table {%
1	0.00278922857873
16	0.043502117841345
31	0.084938005711393
46	0.126830538472328
61	0.169173860262643
76	0.212216925092118
91	0.25579866243708
106	0.300076978141419
121	0.34527075953128
136	0.392443991140671
151	0.441362386862774
166	0.49353622091995
181	0.5479890772358
196	0.599764392873554
211	0.647598711965972
226	0.692716619549719
241	0.73912526471513
256	0.789004800742462
271	0.848969253635834
286	0.931679425434244
300 0.99
};
\addplot [dashed, black]
table {%
1	0.00333333333
16	0.05333333333
31	0.10333333333
46	0.15333333333
61	0.20333333333
76	0.25333333333
91	0.30333333333
106	0.35333333333
121	0.40333333333
136	0.45333333333
151	0.50333333333
166	0.55333333333
181	0.60333333333
196	0.65333333333
211	0.70333333333
226	0.75333333333
241	0.80333333333
256	0.85333333333
271	0.90333333333
286	0.95333333333
300	1	
};
\end{axis}

\end{tikzpicture}}
  \caption{Minimum approval weight of a group of committee
members.}
  \label{fig:overrep2}

\end{subfigure}\hfill
\begin{subfigure}[b]{.3\textwidth}
  \centering
  \resizebox{0.95\textwidth}{!}{\input{./plots_aaai/lost_stake_maximin.tex}}
  \caption{Minimum stake lost in case $\ell$ committee members misbehave.}
  \label{fig:overrep4}
\end{subfigure}\hfill
\begin{subfigure}[b]{.3\textwidth}
  \centering
  \resizebox{0.95\textwidth}{!}{\input{./plots_aaai/cost_of_buying.tex}}
  \caption{Stake needed to replace committee members; $y$-axis is logarithmic.}
  \label{fig:overrep3}
\end{subfigure}
\caption{Different figures for metrics to prevent overrepresentation.  The dashed line is the function $\frac{1}{300}\ell$.}
\label{fig:overrep}
\end{figure*}

\subsubsection{Exogenous Cost of Replacing}
For our final measure, we take an ``exogenous'' view and reason about how much stake a malicious agent would need to possess in order to \textit{replace} a given number of committee members by newly added candidates (assuming that the agent is allowed to add new votes and candidates, while the remainder of the election remains unchanged). 
To the best of our knowledge, this view has not been explored so far. 
We show that, for most of our rules, the cost of replacing $\ell$ candidates can be computed in constant time, provided we have access to data that is generated  while computing the rules. 

\begin{theorem}[informal]\label{theorem1}
    For AV, SAV, seq-Phragmén, and Phragmms, the minimum exogenous cost of replacing $\ell$ candidates can be computed in $\mathcal{O}(1)$ time, assuming we use data from executions of the rules (specified in Appendix~\ref{app:replacing}).
\end{theorem}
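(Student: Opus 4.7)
The plan is to handle each of the four rules separately and, for each, identify an optimal attacker strategy that reduces the ``cost of replacing $\ell$ committee members'' question to a single lookup into a data structure that can be precomputed during the rule's execution. Throughout, I restrict attention (without loss of generality, which I revisit below) to attackers whose new voters approve only new candidates, and whose approval sets are pairwise disjoint.

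For AV, I first observe that one new voter of weight $s$ approving $\ell$ brand-new candidates raises the approval weight of each of these candidates by exactly $s$ and leaves all original approval weights unchanged. For the $\ell$ new candidates to enter the committee, each needs approval weight exceeding the $\ell$-th smallest approval weight $a^{(k)}_\ell$ among current committee members (otherwise fewer than $\ell$ incumbents are displaced). A swap argument shows this is optimal across all attack strategies, so the cost equals $a^{(k)}_\ell$, accessible in $O(1)$ given the sorted list of committee members' approval weights cached during the run of AV. For SAV, the reasoning is analogous, except a voter of weight $s$ approving $j$ candidates contributes only $s/j$ to each candidate's SAV-score; a short case analysis shows that every optimal attack has total cost equal to $\ell$ times the $\ell$-th smallest SAV-score among the committee, and this score list is produced during SAV's execution.

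For seq-Phragmén, I would use the virtual-money interpretation in which each committee member has a well-defined entry time $t_i$, and I precompute the sorted list $t_{(1)} \leq \cdots \leq t_{(k)}$ during the execution. A new voter of weight $w$ approving only new candidates accumulates money at rate $w$, so its first new candidate is elected at time $1/w$, its second at $2/w$, and so on, without affecting the original entry times (since no incumbent's supporter set changes). Partitioning the $\ell$ new candidates across any number of new voters, one verifies that the total stake needed is exactly $\ell/t_{(k-\ell+1)}$: a new voter supporting $j_i$ candidates needs stake at least $j_i/t_{(k-\ell+1)}$ for its last candidate to enter before the $(k-\ell+1)$-th incumbent, and these lower bounds sum to $\ell/t_{(k-\ell+1)}$. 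For Phragmms the same strategy applies, and the key observation is that a new candidate supported exclusively by fresh voters attains a Phragmms score equal to the total stake of its supporters (balancing is vacuous on the new side and cannot hurt the existing side); thus the cost again reduces to a closed-form expression in terms of the selection scores recorded during the original Phragmms run.

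The main obstacle is the correctness of the ``pure-attack'' reduction for the two iterative rules. One must rule out attacks in which a new voter approves a mix of new and existing candidates, or in which approval sets of new voters overlap, which could in principle make new candidates cheaper to elect by exploiting the iterative balancing (especially for Phragmms). The cleanest route is a monotonicity lemma: any feasible attack can be reshaped, without increasing total stake, into a pure attack, because stake devoted to an existing candidate is wasted for the purpose of electing a new one, and any overlap between new voters can be disaggregated without loss. Establishing this lemma carefully under both seq-Phragmén and Phragmms is the heart of the argument, and is what Appendix~\ref{app:replacing} must verify in detail; the corresponding claim for AV and SAV is immediate because of their additive, non-iterative score definitions.
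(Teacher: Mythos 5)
Your treatment of AV, SAV, and seq-Phragm\'en identifies the same threshold quantities and the same canonical attacks as the paper (a single voter of weight equal to the $\ell$-th smallest committee approval weight for AV; $\ell$ singleton voters of weight equal to the relevant SAV score, resp.\ of weight $1/t$ where $t$ is the election time of the $(k-\ell+1)$-th selected candidate, for SAV and seq-Phragm\'en), so for those three rules you are essentially reproving the paper's argument. One structural difference: the paper \emph{defines} an $\ell$-replacement to be ``properly separated'' (new voters approve only new candidates), so the mixed-ballot case you single out as the main obstacle never arises there; and for seq-Phragm\'en the paper's lower bound is an aggregate money count ($\ell$ units must accrue to the new voters by time $t$, hence total new weight at least $\ell/t$), which covers arbitrarily overlapping new ballots in one line and makes your per-voter bound plus disaggregation step unnecessary.

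The genuine gap is in the Phragmms case. Your upper bound (fresh disjoint voters give each new candidate a score equal to its supporters' stake) is fine, but the matching lower bound is exactly the step you leave as an unproven ``monotonicity lemma,'' and it is not routine: a single new voter of weight $x$ approving all $\ell$ new candidates already produces the score sequence $x, x/2, \dots, x/\ell$, so one must show that \emph{no} ballot structure of total weight $x$ can push the $\ell$-th selection score above $x/\ell$. The paper gets this from Lemma 18 of \citet{CeSt21a} (the minimum backing weight after iteration $i-1$ upper-bounds the $i$-th score, which forces $s'_i \le x/i$), together with a merging claim showing that the extended election interleaves the two independent score sequences. Without that inequality, or a proof of your reshaping lemma strong enough to deliver it, the value $\ell\cdot\min\{s_1,\dots,s_{k-\ell+1}\}$ is only an upper bound on the cost, and the $\mathcal{O}(1)$ claim for Phragmms is not established.
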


For details, we refer to Appendix~\ref{app:replacing}, where we also discuss the complexity of this problem for seq-PAV and MES. 
\Cref{fig:overrep3} shows the stake needed to replace $\ell$ committee members, for $\ell \in [k]$ (the $y$-axis is logarithmic and the pink line stands for AV where approval votes of unbounded length are allowed). 
AV has the highest cost for $\ell=1$, since AV maximizes the minimum approval weight of a selected candidate (see \Cref{tab:overrep}).
For larger $\ell$, replacements under AV become cheaper, as the same stake can be used to replace multiple candidates. 
In contrast, for the other rules (except seq-PAV and MES), there always exists an optimal replacement in which every voter approves one candidate. 
However, already for $\ell=4$ the cost of replacing starts to be more expensive for seq-Phragmén and Phragmms than for AV. 
Notably, external attacks for AV would be even cheaper if voters were allowed to approve an unbounded number of candidates (see the pink line in \Cref{fig:overrep3}). 
For larger $\ell$, the cost for seq-Phragmén and Phragmms becomes substantially higher than for AV and even SAV (see also \Cref{tab:overrep}).
In particular, for both seq-Phragmén and Phragmms, replacing one-third of the committee would require $40\%$ of the total stake of all voters. 
This is roughly the total stake possessed by agents not participating in the validator elections, and thus makes such an external attack highly unlikely.

\section{Analyzing Design Decisions}\label{sec:desdec}
In this section, we briefly summarize our findings regarding the influence of various design decisions on our measures. For the full analysis and relevant data, see Appendix~\ref{app:desdec}.

\subsection{Choosing the Committee Size}
In Polkadot, the committee size is determined by the network's governance body of token holders and is thus an adjustable design choice. 
This observation raises the following general question: Based on the desiderata formulated in the previous sections, would it be beneficial to increase or decrease the committee size?

We focus on $k=200$ and $k=400$ as two alternative committee sizes (Appendix~\ref{app:desdec} also includes $k=250$ and $k=350$). Generally speaking, for $k=200$ and $k=400$, the differences between the rules are similar as for $k=300$, so we only focus on the general trends in the results. 
Regarding underrepresentation, increasing $k$ turns out to be clearly favorable: For $k=400$, the minimum average satisfaction of $\ell$-supporting groups increases substantially.
Regarding overrepresentation, it turns out that the committee size has only a marginal influence on attacks trying to take over one-third or half of the committee:
Intuitively speaking,  individual candidates in a committee with more candidates are less backed; 
on the other hand, one-third/half of the committee corresponds to larger numbers of candidates. 
It turns out that both effects approximately cancel out for the cost of replacing. 
By contrast, the minimum approval weight values slightly drop by around $15\%$ when increasing $k$ to~$400$.
 
\subsection{Selecting Candidates Multiple Times}
\citet{BGP+19a} initiated the study of ABC elections where 
multiple copies of a candidate can be included in the committee. 
Here, we analyze what impact this would have on our measures.\footnote{While  this is currently not possible in the Polkadot network, it is very easy for candidates to create copies of themselves. Indeed, entities that own a large amount of stake create multiple (up to $100$) candidates in the system and even label them as copies. 
Thus, \textit{de facto} the same entity can already control multiple candidates.}

Generally speaking, allowing copies in the committee is favorable. Most notably, it leads to a more uniform distribution of candidates' backing weights in the maximin assignment, leading to a drop of their 
variance 
by around $80\%$. 
The only metrics where allowing copies leads to a worse performance regard the satisfaction of $\ell$-supporting groups. This is to be expected, as allowing for copies implies that \textit{all} candidates count as ``non-selected'' and, thus, our measures range over strictly more subgroups of voters. 
Nevertheless, the minimum average satisfaction of $\ell$-supporting groups remains quite high, suggesting that allowing for copies might be a worthwhile consideration for platform designers. 
Remarkably, whether or not to allow copies has a stronger (positive) impact on several measures than changing the committee size to $200$ or $400$ (or choosing among the proportional rules). In particular, this holds for the PAV score, the cost of replacing candidates, and the minimum approval weight of groups of committee members.

\section{Discussion and Conclusion}\label{sec:conc}
We conducted a thorough multi-criteria analysis of the behavior of different ABC voting rules in Polkadot elections. We conclude with a discussion of our data and results followed by a summary of different directions for future work.

\subsubsection*{Discussion and Summary of Results}

Generally speaking, we found that the ``proportional'' rules (seq-PAV, seq-Phragm\'en, MES, and Phragmms) behave very similarly. 
Only seq-PAV showed a slightly different behavior: Our results suggest that the rule prioritizes the total weighted satisfaction of voters slightly more, even if this means that some voters have a greater influence on the outcome than others.
One reassuring observation for Polkadot developers is that all four rules provide a good level of security.  
For instance, for both seq-Phragm\'en and Phragmms, seizing control of the network by replacing one-third of the committee with malicious candidates requires around $40\%$ of the tokens present in the election. 

In contrast, SAV and the common AV rule tend to return committees that are less similar to the ones produced by the proportional rules. In particular, they frequently violate basic representation axioms and return outcomes overrepresenting certain voter groups. 
The most drastic difference is that seizing control over one-third of the committee only requires $14\%$ of tokens for AV and $24\%$ for SAV. Nevertheless, the performance of both rules could still be viewed as acceptable with regard to the other security measures. 

We have also explored the impact of various design decisions. Here, we observed that changing the committee size only marginally influences our measures, whereas allowing for selecting (copies of) candidates multiple times is a more impactful and generally beneficial design decision.  

\subsubsection*{Collected Data}
We have collected $496$ and\ $\num{1520}$ large approval-based committee elections from the Polkadot and Kusama  network, respectively and hope that others will 
use our data which is available on \url{github.com/n-boehmer/ABC-practice-Polkadot}. 
One might argue that the fact that all voting rules substantially outperform their axiomatic guarantees on our collected elections (e.g., seq-PAV always fulfills EJR+) suggests that voters' preferences must be quite ``simple''.
While measuring the ``simplicity'' of an election is non-trivial, we calculated the following quantity as a proxy: For any two voters with overlapping approval set, we divide the size of the intersection of their approval sets by the sum of the sizes of their approval sets. If we would be in the so-called ``party-list'' setting, i.e., any pair of approval sets are disjoint or equal, then this value would be $0.5$. However, we found that the average of this value is typically around $0.1$, proving that our elections are far from falling into the simple party-list setting.
Instead, our interpretation of the observed phenomenon is rather that most of our considered voting rules, e.g., seq-PAV, satisfy demanding proportionality axioms on most (reasonable) elections, only violating them on adversarial constructed ones unlikely to occur in real-world or synthetic data. Indeed, this view is supported by the works of \citet{mapofRules}, \citet{BFNK19a}, and \citet{BrPe23a}, who also report that voting rules substantially outperform theoretical guarantees and that even random committees often satisfy strong proportionality notions. 

\subsubsection*{Future Work}
Several lines of continuation of this work can be envisioned. First,
it would be interesting to dive deeper into the analysis of different voting rules, trying to identify properties of ``controversial'' candidates (i.e., candidates that are only selected by some rules but not others).

Second, it would be intriguing to analyze the connection between over- and underrepresentation in more detail, both from a theoretical and an empirical perspective; for instance, it would be interesting to relate under- and overrepresentation axioms to each other formally, or to give bounds on the ``gap'' between a rule's over- and underrepresentation performance. 
Moreover, the goal of preventing overrepresentation (motivated by Polkadot's security concerns)  by itself constitutes a new research direction deserving further attention and intense study.

Third, 
our finding that classic ``binary'' axioms are not able to meaningfully distinguish voting rules  in practice highlights the necessity for further theoretical work on axioms. 
Our hope is that our work initiates a rethinking of the standard binary axiomatic analysis and a shift more into a direction of \textit{quantitative} axioms (e.g. the \emph{cost of replacing} candidates). Our paper takes a first step in this direction, but given the multitude of binary axiomatic works, there are clearly numerous directions remaining.

\section*{Acknowledgments}
This work was initiated at Dagstuhl Seminar 22271 ``Algorithms for Participatory Democracy'' in July 2022 (\url{https://www.dagstuhl.de/22271}).
We thank Jannik Peters for helpful comments. 
Markus Brill was supported by the Deutsche Forschungsgemeinschaft (DFG) under grant BR 4744/2-1. 
Luis S\'anchez-Fern\'andez was supported by the “Generation of Reliable Synthetic Health Data for Federated Learning in Secure Data Spaces” Research Project (PID2022-141045OB-C43 (AEI/ERDF, EU)) funded by MCIN/AEI/ 10.13039/501100011033 and by “ERDF A way of making Europe” by the “European Union”. Ulrike Schmidt-Kraepelin was supported by the \emph{Centro de Modelamiento Matemático (CMM)} (under grant FB210005, BASAL funds for center of excellence from ANID-Chile), \emph{ANID-Chile} (grant ACT210005) as well as by the National Science Foundation under Grant No. DMS-1928930 and the Alfred P. Sloan Foundation under grant G-2021-16778 while she was in residence at the Simons Laufer Mathematical Sciences Institute (formerly MSRI) in Berkeley, California, during the Fall 2023 semester. The main work was done while Niclas Boehmer was affiliated with TU Berlin, where he was supported by the DFG project ComSoc-MPMS (NI 369/22).

\section*{Ethics Statement}
This paper acknowledges the significant environmental impact associated with blockchain technology.
However, the Polkadot network employs a Proof-of-Stake (PoS) consensus mechanism that does not involve energy-intensive mining activities that are common to the more well-known Proof-of-Work (PoW) systems. 
This results in an energy usage that is several orders of magnitude smaller. 
Polkadot has been acknowledged as one of the protocols with the lowest carbon footprint having an annual CO2 emission equivalent to that of five average American households \citep{report}.
This is partly because the number of validators who participate in consensus is limited enabled by the election process studied in the paper. 
 
\bibliographystyle{plainnat}

\clearpage 
\appendix

\section*{Appendix}

\section{Additional Material for \Cref{sec:under}}

\begin{table*}[!htb]
    \centering
  \begin{tabular}{lcccccccccc}
\toprule
{} &     weighted  &  max. approval weight& \multicolumn{3}{c}{min. average satisfaction}  \\
{} &      satisfaction  & loser&    $\ell=1$ &  $\ell=5$ &  $\ell=10$ \\
\midrule
AV          &                  9.055 &      0.011 &          0.000 &              $-$ &               $-$  \\
SAV         &                  8.724 &      0.022  &          0.011 &            4.359 &               $-$  \\
seq-PAV     &                  8.672 &       0.028 &         1.092 &            7.442 &            14.124 \\
Phragmms      &                  8.650 &        0.039 &       0.960 &            6.840 &            13.510  \\
seq-Phrag.      &                  8.588 &      0.039  &      0.990 &            6.678 &            13.468  \\
MES             &                  8.584 &       0.039 &       0.980 &            6.773 &            13.438  \\
\bottomrule
\end{tabular}  
    \caption{Average values for different notions related to voter satisfaction and  preventing underrepresentation.} \label{app:rep_app}
\end{table*}

\begin{figure*}[!htb]
    \centering
    \begin{minipage}[t]{0.45\textwidth}
  \centering
  \resizebox{0.8\textwidth}{!}{
\begin{tikzpicture}[every plot/.append style={line width=2.2pt}]

\definecolor{crimson2143940}{RGB}{214,39,40}
\definecolor{darkgray176}{RGB}{176,176,176}
\definecolor{darkorange25512714}{RGB}{255,127,14}
\definecolor{forestgreen4416044}{RGB}{44,160,44}
\definecolor{mediumpurple148103189}{RGB}{148,103,189}
\definecolor{sienna1408675}{RGB}{140,86,75}
\definecolor{steelblue31119180}{RGB}{31,119,180}

\begin{axis}[
legend columns=2, 
legend cell align={left},
legend style={
  fill opacity=0.8,
  draw opacity=1,
  draw=none,
  text opacity=1,
  at={(0.5,1.35)},
  line width=3pt,
  anchor=north,
   /tikz/column 2/.style={
  	column sep=10pt,
  }, font=\LARGE
},
legend entries={AV,
	seq-Phrag.,
	SAV, Phragmms,
	seq-PAV, MES},
tick align=outside,
tick pos=left,
x grid style={darkgray176},
xlabel={$\ell$},
xmin=0.35, xmax=14.65,
xtick style={color=black},
y grid style={darkgray176},
ylabel={\#candidates with $\ell$-supporting groups},
ymin=-5.86290322580645, ymax=123.120967741935,
ytick style={color=black},every tick label/.append style={font=\LARGE}, 
label style={font=\LARGE},
xticklabels={$2$,$4$,$6$,$8$,$10$,$12$,$14$},
xtick={2,4,6,8,10,12,14}
]
\addlegendimage{steelblue31119180}
\addlegendimage{mediumpurple148103189}
\addlegendimage{darkorange25512714}
\addlegendimage{crimson2143940}
\addlegendimage{forestgreen4416044}
\addlegendimage{sienna1408675}
\addplot [semithick, steelblue31119180, mark=x, mark size=3, mark options={solid}]
table {%
1 109.10685483871
2 49.2076612903226
3 14.9254032258065
4 2.3366935483871
};
\addplot [semithick, darkorange25512714, mark=x, mark size=3, mark options={solid}]
table {%
1 116.625
2 65.179435483871
3 42.9677419354839
4 25.7258064516129
5 15.0120967741935
6 6.67338709677419
7 3.2258064516129
8 0.171370967741935
};
\addplot [semithick, forestgreen4416044, mark=x, mark size=3, mark options={solid}]
table {%
1 117.258064516129
2 66.491935483871
3 45.1229838709677
4 30.1895161290323
5 17.5604838709677
6 8.73991935483871
7 5.34274193548387
8 1.05443548387097
9 0.491935483870968
10 0.266129032258065
11 0.0766129032258065
};
\addplot [semithick, crimson2143940, mark=x, mark size=3, mark options={solid}]
table {%
1 116.911290322581
2 65.3649193548387
3 42.3709677419355
4 26.3528225806452
5 15.6612903225806
6 10.9274193548387
7 8.40725806451613
8 3.98387096774194
9 3.48991935483871
10 2.61693548387097
11 2.07862903225806
12 0.743951612903226
13 0.0705645161290323
};
\addplot [semithick, mediumpurple148103189, mark=x, mark size=3, mark options={solid}]
table {%
1 117.040322580645
2 65.8487903225806
3 45.6693548387097
4 31.8770161290323
5 20.9576612903226
6 13.1088709677419
7 9.18346774193548
8 4.36290322580645
9 3.70564516129032
10 2.6875
11 2.06653225806452
12 0.67741935483871
13 0.0665322580645161
};
\addplot [semithick, sienna1408675, mark=x, mark size=3, mark options={solid}]
table {%
1 117.066532258065
2 66.2439516129032
3 46.3528225806452
4 32.3608870967742
5 20.9193548387097
6 13.0967741935484
7 8.99395161290323
8 4.33266129032258
9 3.72379032258065
10 2.63104838709677
11 2.04637096774194
12 0.647177419354839
13 0.0584677419354839
};
\end{axis}

\end{tikzpicture}}
  \caption{Number of non-elected candidates with $\ell$-supporting groups. Only non-zero values are plotted.}
  \label{fig:representation2}
\end{minipage}\hfill
\begin{minipage}[t]{0.45\textwidth}
\centering
   \resizebox{0.8\textwidth}{!}{\input{./plots_aaai/prica_gap.tex}}
  \caption{Average amount by which supporter's money exceeds price for priceability system.}
  \label{fig:representation1}
\end{minipage}
\end{figure*}

In this section, we provide some more information regarding the satisfaction and representation of voters (see \Cref{app:rep_app} for an overview).  
We start by considering the average weighted satisfaction of the voters; see the first column of \Cref{app:rep_app}. AV maximizes this objective; the other rules perform between $4\%$ and  $5.3\%$ worse, with SAV producing the best and MES producing the worst result. In general, the observed weighted satisfaction values are remarkably high given that voters approve less than $10$ candidates on average.

To give some more context on the minimum average satisfaction of $\ell$-supporting groups (see \Cref{app:rep_app}), \Cref{fig:representation2} depicts the average number of non-selected candidates for which there is an $\ell$-supporting group in our elections. 
In \Cref{fig:representation2}, we observe that for all rules, the number of non-selected candidates for which an $\ell$-supporting group exists quickly decreases with increasing $\ell$. 
We see a contrast between AV and the other rules, especially with respect to the size of the largest supporting group.
For AV, there are no $\ell$-supporting groups for $\ell>4$, implying that having an approval weight larger than $\frac{5}{300}$ 
is always sufficient to be included in the committee. For the other rules, this is not the case, as here there also exist $\ell$-supporting groups for more candidates and for larger values of $\ell$. While generally speaking all other rules show a similar performance, SAV and seq-PAV tend to produce slightly less non-selected candidates with $\ell$-supporting groups.
This behavior is also reflected in the maximum approval weight of a non-selected candidate shown in column two of \Cref{app:rep_app}.

\section{Additional Material for \Cref{sec:over}}

\subsection{Computing Candidate Groups with Minimum Approval Weight}\label{app:minimum_support}
To generate \Cref{fig:overrep2}, we needed to compute size-$\ell$ candidate sets with minimum approval weight. 
We prove that this problem is NP-hard and give an ILP formulation, which we used in our experiments. 

\begin{proposition}
    Given an election $(C,V,A,w,k)$, a committee $W$, and an integer $\ell\in [|W|]$, computing the size-$\ell$ subset $W'\subseteq W$ with minimum approval weight, i.e., $\argmin_{W'\subseteq W: |W'|=\ell}w(W')$, is NP-hard.
\end{proposition}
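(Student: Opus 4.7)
I would prove NP-hardness by a direct reduction from the \emph{Densest-$k'$-Subgraph} (D$k'$S) problem, which is well-known to be NP-hard: given a graph $G=(V_G,E_G)$ and an integer $k'\le |V_G|$, decide whether there exists a vertex set $S\subseteq V_G$ with $|S|=k'$ that induces at least $m$ edges. The problem at hand has an appealing dual reading: picking $\ell$ candidates from $W$ and summing the weights of voters hit is the same as choosing $\ell$ of the supporter sets $\{V_c\}_{c\in W}$ whose weighted union is minimized (a Min-$\ell$-Union problem on the sets $V_c$), which in turn is complementary to keeping the edges among the $|W|-\ell$ remaining candidates. The plan is to use this complementary view to encode graph density.

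Given an instance $(G,k',m)$ of D$k'$S, I would build an election as follows. Let $C=V_G$, set the committee size $k=|V_G|$ and take the input committee to be $W=C$. For every edge $e=\{u,v\}\in E_G$ introduce a voter $v_e$ with approval set $A_{v_e}=\{u,v\}$ and weight $w(v_e)=1/|E_G|$ (uniform normalized weights; other ABC parameters such as the overall $k$ are inert for this decision problem). Finally set $\ell=|V_G|-k'$. The construction is clearly polynomial in the size of $G$.

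The key structural observation is the following identity: for any $W'\subseteq W$ with $|W'|=\ell$, letting $\bar W'=V_G\setminus W'$ (so $|\bar W'|=k'$),
\[
w(W') \;=\; \frac{1}{|E_G|}\bigl|\{e\in E_G : e\cap W'\neq\emptyset\}\bigr| \;=\; \frac{|E_G|-|E_G(\bar W')|}{|E_G|},
\]
where $E_G(\bar W')$ denotes the set of edges of $G$ with both endpoints in $\bar W'$. This holds because a voter $v_e$ is counted in $w(W')$ iff the edge $e$ has at least one endpoint in $W'$, equivalently iff $e\notin E_G(\bar W')$. Consequently, minimizing $w(W')$ over size-$\ell$ subsets of $W$ is equivalent to maximizing $|E_G(\bar W')|$ over size-$k'$ subsets $\bar W'$ of $V_G$, which is precisely D$k'$S. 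Thus any algorithm solving the minimization problem solves D$k'$S, and NP-hardness follows.

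The only possible obstacle is making sure every clause of the problem definition is respected (in particular, that the constructed instance is a valid weighted ABC election with $w:V\to[0,1]$ summing to $1$ and with $\ell\in[|W|]$); both are immediate from the construction since weights are $1/|E_G|$ and $1\le |V_G|-k'\le |V_G|=|W|$ whenever $1\le k'<|V_G|$ (the trivial boundary cases are dealt with separately). Since no heavy machinery beyond the known hardness of D$k'$S is needed, the argument should fit in a short paragraph; alternatively, the same reduction can be phrased directly from Min-$k$-Union on the family $\{V_c\}_{c\in W}$, which is itself NP-hard via D$k'$S.
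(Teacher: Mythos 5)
Your reduction is correct and complete: the edge-voter gadget (candidates are vertices, each edge contributes a voter approving its two endpoints, uniform weights $1/|E_G|$), the identity $w(W') = (|E_G| - |E_G(\bar W')|)/|E_G|$, and the resulting equivalence between minimizing $w(W')$ over size-$\ell$ sets and maximizing the number of edges induced by the complementary size-$k'$ set are all sound, and Densest-$k'$-Subgraph is indeed NP-hard (it contains Clique as the special case $m = \binom{k'}{2}$). The paper uses the very same gadget but a different accounting trick: it reduces from \textsc{Clique} restricted to \emph{$r$-regular} graphs, takes $W'$ to be the candidate clique itself (so $\ell = t$ rather than $|V_G|-k'$), and uses regularity to write $w(W') = r\cdot t - |E_G(W')|$, so that $w(W') \le r\cdot t - \binom{t}{2}$ certifies a clique. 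Your complementation argument buys you something the paper's does not: you never need the degree-regularity assumption, since you count the edges \emph{missed} by $W'$ rather than inclusion--exclusion over the edges it hits, and the reduction works from general graphs. The paper's version buys a slightly more direct statement (the hard instances have $\ell$ equal to the clique size, which may be small, whereas yours forces $\ell = |V_G| - k'$ to be large); if one cares about hardness for small $\ell$, the paper's parametrization is the more informative one. Either argument suffices for the proposition as stated.
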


\begin{proof}
    We reduce from \textsc{Clique} on regular graphs. 
    Given an $r$-regular graph $G=(U,E)$ and an integer $t$, we construct our election as follows. 
    We add a candidate $c_u$ for each $u\in U$ and a voter for each edge $e\in E$ approving the two candidates corresponding to the vertices incident to $e$. 
    All voters have weight $1$, we set $W$ to be the set of all candidates and ask whether there is a size-$t$ set $W'\subseteq W$ of candidates with $w(W')\leq r\cdot t - {t\choose 2}$. 

    Assume that $U'\subseteq U$ is a size-$t$ clique in $G$, then $W':=\{c_u: u\in U'\}$ fulfills $w(W')\leq r\cdot t - {t\choose 2}$, as there are ${t\choose 2}$ voters that approve two candidates from $W'$. 

    Assume that we are given a size-$t$ candidate set $W'$ with $w(W')\leq r\cdot t - {t\choose 2}$, then $U':=\{u: c_u\in W'\}$ is a size-$t$ clique in $G$. This follows as $w(W')\leq r\cdot t - {t\choose 2}$ implies that there must be ${t\choose 2}$ voters approving two candidates from $W'$, implying that there are  ${t\choose 2}$ edges with two endpoints in $U'$.
\end{proof}

To solve the problem in our experiments, we model it as an Integer Linear Program (ILP) and solve it using Gurobi.
We add a binary variable $x_c$ for each $c\in C$ which is $1$ if $c$ is selected as part of the set and $0$ otherwise. 
We impose that a size-$\ell$ candidate set is selected by requiring that:
$$\sum_{c\in C} x_c=\ell.$$
Further, we add a real variable $y_v$ for each $v\in V$ which is intended to be $0$ if $v$ approves of any of the selected candidates and $1$ otherwise.
We enforce these upper bounds of $0$ or $1$ by adding for each voter $v\in V$ and each candidate $c\in A_v$ the following constraint: 
$$x_c+y_v\leq 1,$$
while lower-bound constraints are not needed. Finally, we add the following objective function to enforce that a subset of minimum approval weight is selected:
$$\min \sum_{v\in V} (1-y_v)\cdot w(v).$$

\subsection{Proof of \Cref{theorem1}} \label{app:replacing}

We now formally define the minimum exogenous cost of an $\ell$-replacement, which was informally introduced in \Cref{sec:over}. 
Consider some weighted multiwinner election $E = (C,V,A,w,k)$. 
We call any $(C',V',A',w')$ with $|C'|=\ell$ and $C \cap C', V \cap V' = \emptyset$ an $\ell$-replacement. 
Its weight is $w(V')=\sum_{v \in V'} w'(v)$, and we say it is \emph{successful} for some voting rule if $C' \subseteq W$, where $W$ is the committee selected by the rule on the 
extended election $(C \cup C',V \cup V', A \cup A', w \cup w',k)$, where $w \cup w'$ denotes the natural concatenation of the two weight functions $w$ and $w'$. 
The \emph{minimum exogenous cost of an $\ell$-replacement} is the minimum weight of a successful $\ell$-replacement. 

In order to compute a minimum $\ell$-replacement for all $\ell \in [300]$ and all of our tested elections, which we depicted in \Cref{fig:overrep3}, we derived algorithms for AV, SAV, seq-Phragmén, and Phragmms. When reporting the running time for any of these algorithms, it is important to point out that we assume that certain information which is naturally derived from the execution of a voting rule on the original instance, can be taken as input for our computations. Since this information may vary for each voting rule, we specify the required information in each theorem statement. 

Since all considered rules are sequential, we can order the selected candidates chronologically: for a given voting rule, 
let $c_1$ be the first selected candidate, let $c_2$ be the second selected candidate, and so on. 
Note that this labeling depends on the rule under consideration. 

\begin{theorem}
    For every $\ell \in [k]$, the minimum exogenous cost of an $\ell$-replacement for AV can be computed in time $\mathcal{O}(1)$, assuming that we get as input the approval weight of the candidate $c = c_{k-\ell +1}$, i.e., $w(V_c)$. 
\end{theorem}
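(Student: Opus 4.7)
My plan is to exploit the very simple structure of AV: the winning committee consists of the $k$ candidates with largest approval weight, so any successful $\ell$-replacement must install $\ell$ new candidates whose approval weights surpass the threshold $T := w(V_c)$, where $c = c_{k-\ell+1}$ marks the ``cheapest-to-displace'' boundary. Any other choice of $\ell$ existing winners to evict would force a strictly larger threshold (and thus strictly larger cost), so without loss of generality the attacker aims to push out precisely the $\ell$ committee members with the smallest approval weight. From here the problem reduces to: what is the smallest total weight of new voters needed so that each of the $\ell$ fresh candidates accumulates at least $T$ approval weight, given that each voter approves at most $16$ candidates?

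For the lower bound, I will combine two counting arguments on the attacker's total weight $W$. First, a single-candidate bound: any one new candidate must receive approval weight at least $T$ from supporters, so $W \geq T$. Second, a demand--supply bound: the aggregate demand over all $\ell$ new candidates is $\ell T$, while each unit of attacker weight can supply at most $16$ units of demand because a voter contributes its full weight to each of its (at most $16$) approved candidates. Hence $16 W \geq \ell T$, giving $W \geq \ell T / 16$. Together, $W \geq \max(T, \ell T / 16)$.

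For the matching upper bound, I will give explicit constructions. When $\ell \leq 16$, a single new voter of weight $T$ approving all $\ell$ fresh candidates endows each with approval weight exactly $T$, achieving cost $T$. When $\ell > 16$, I use $\ell$ new voters of weight $T/16$ each, where voter $i$ approves the new candidates indexed $\{i, i+1, \ldots, i+15\} \pmod{\ell}$; each voter respects the $16$-approval bound, and each new candidate is approved by exactly $16$ of these voters, receiving approval weight $T$. The total cost is $\ell T / 16$. Lexicographic tie-breaking can be absorbed by an infinitesimal perturbation or by assigning favorable indices to the new candidates; this does not change the infimum.

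Combining the bounds, the minimum exogenous cost is $\max(w(V_c), \ell \cdot w(V_c)/16)$, which is a constant-time arithmetic expression in the single provided input $w(V_c)$. The main subtlety is the optimality argument for displacing the $\ell$ lowest-ranked original winners (a short swap argument), and the secondary care point is verifying that the cyclic incidence structure for $\ell > 16$ covers every new candidate exactly $16$ times while keeping each voter within the $16$-approval limit; both are routine but worth stating explicitly.
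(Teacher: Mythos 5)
Your argument is internally sound, but it proves a slightly different statement from the one the paper proves, because you have imported the Polkadot-specific ballot-length cap ($|A_v|\le 16$) into the definition of an $\ell$-replacement. The paper's formal definition of an $\ell$-replacement $(C',V',A',w')$ places no bound on the size of the new voters' approval sets, and its proof is a two-line observation: a replacement succeeds if and only if every $c'\in C'$ has approval weight at least $T:=w(V_{c_{k-\ell+1}})$, so a single new voter of weight $T$ approving all $\ell$ new candidates is feasible and weight-minimal, giving cost exactly $T$ for every $\ell$. Under that model your demand--supply lower bound $16W\ge \ell T$ is false, since one unit of voter weight can supply $\ell$ units of demand rather than $16$; consequently your closed form $\max(T,\ell T/16)$ overstates the minimum for $\ell>16$. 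That said, your bounded-ballot analysis is correct on its own terms: the reduction to the threshold $T$ via displacing the $\ell$ lowest-weight winners is the same first step the paper takes, your two counting lower bounds are valid once the cap is assumed, and the cyclic covering design (for $\ell>16$, $\ell$ voters of weight $T/16$ each approving $16$ consecutive new candidates modulo $\ell$) does give every new candidate exactly $16$ supporters and hence approval weight $T$, matching the lower bound. Interestingly, your variant is the one the paper actually computes for the AV curve in its overrepresentation figure (the unconstrained version is plotted separately there as ``AV (unb.)''), so your formula is arguably the more application-faithful one; and since both formulas are constant-time expressions in $w(V_c)$, the theorem's $\mathcal{O}(1)$ computability claim holds either way. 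The one thing you must do is state explicitly which class of admissible replacements you are minimizing over, since the answer changes at $\ell=17$.
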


\begin{proof}
    Let $(C',V',A',w')$ be an $\ell$-replacement for election $(C,V,A,w,k)$. It is successful for AV if and only if $$w'(V'_{c'}) \geq w(V_{c_{k-\ell+1}})$$ holds for all $c' \in C'$.%
    \footnote{We assume in this section that any election rule breaks ties in a way that benefits the candidates in the $\ell$-replacement.} 
    Thus, an $\ell$-replacement that consists, for example, of one voter with weight $w(V_{c_{k-\ell+1}})$ approving all $\ell$ candidates in $C'$, is clearly successful and weight minimal.
\end{proof}

\begin{theorem}
    For every $\ell \in [k]$, the minimum exogenous cost of an $\ell$-replacement for SAV can be computed in time $\mathcal{O}(1)$, assuming that we get as input the SAV score of the candidate $c = c_{k-\ell + 1}$, i.e., $x:=\sum_{v \in V_{c_{k-\ell+1}}} \frac{w(v)}{|A_v|}$.
\end{theorem}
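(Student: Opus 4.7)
My plan is to show that the minimum exogenous cost of an $\ell$-replacement under SAV is exactly $\ell \cdot x$, which is clearly computable in $\mathcal{O}(1)$ once $x$ is given. The argument has a lower-bound and a matching-upper-bound part.

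First, I would observe two structural facts about SAV in the extended election $(C \cup C', V \cup V', A \cup A', w \cup w', k)$. Since a SAV score depends only on voters who approve the candidate in question and on the sizes of those voters' approval sets, the scores of the original candidates $c \in C$ are unchanged by the addition of $(C',V',A',w')$. Thus a replacement is successful if and only if every $c' \in C'$ attains SAV score at least $x$ (with favorable tie-breaking), so that the new candidates displace exactly $c_{k-\ell+1}, \dots, c_k$. Next, I would argue that in a weight-minimal successful replacement we may assume without loss of generality that each new voter $v \in V'$ approves only candidates in $C'$: removing an approved candidate $c \in C$ from $A'_v$ strictly decreases $|A'_v|$, which only increases the SAV score of each approved new candidate, so feasibility is preserved; removing such approvals never increases the weight needed.

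Second, under this assumption, for each $v \in V'$ with $|A'_v|$ new approved candidates, the total contribution of $v$ to the sum $\sum_{c' \in C'} \mathrm{sc}(c')$ of SAV scores of new candidates equals $|A'_v| \cdot \frac{w'(v)}{|A'_v|} = w'(v)$. Summing over all $v \in V'$ gives the identity
\[
\sum_{c' \in C'} \mathrm{sc}(c') \;=\; \sum_{v \in V'} w'(v).
\]
Combined with the requirement $\mathrm{sc}(c') \ge x$ for each of the $\ell$ new candidates, this yields the lower bound $\sum_{v \in V'} w'(v) \ge \ell \cdot x$.

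Finally, I would exhibit a matching construction: introduce a single new voter of weight $\ell \cdot x$ who approves all $\ell$ new candidates; each new candidate then has SAV score exactly $\frac{\ell x}{\ell} = x$, which by assumption on tie-breaking is enough to enter the committee. Hence the minimum exogenous cost equals $\ell \cdot x$, which can be returned in constant time. The only subtle step is the ``WLOG only new candidates are approved'' reduction, which is the natural place for a careful argument, but as sketched above it is immediate from monotonicity of the SAV score in $1/|A'_v|$.
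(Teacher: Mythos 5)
Your proof is correct and follows essentially the same route as the paper: both identify the optimum as $\ell \cdot x$ and exhibit a weight-$(\ell \cdot x)$ witness (the paper uses $\ell$ voters of weight $x$ each approving a distinct new candidate, you use a single voter of weight $\ell \cdot x$ approving all of $C'$; the two are equivalent). Your explicit lower bound via the identity $\sum_{c' \in C'} \mathrm{sc}(c') \le \sum_{v \in V'} w'(v)$ together with $\mathrm{sc}(c') \ge x$ is a nice touch, as the paper merely asserts that its construction is weight-minimal.
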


\begin{proof}
    Let $(C',V',A',w')$ be an $\ell$-replacement for election $(C,V,A,w,k)$. It is successful for SAV if and only if 
    $$\sum_{v \in V'_{c'}}\frac{w'(v)}{|A_v|} \geq \sum_{v \in V_{c_{k-\ell+1}}} \frac{w(v)}{|A_v|} = x$$ holds for all $c' \in C'$. 
    Thus, an $\ell$-replacement that consists, for example, of $\ell$ voters each with weight $x$ and approving a distinct candidate in $C'$, 
    is clearly successful and weight minimal, with weight $\ell \cdot x$.
\end{proof}

\begin{theorem}
        For every $\ell \in [k]$, the minimum exogenous cost of an $\ell$-replacement for seq-Phragmén can be computed in time $\mathcal{O}(1)$, 
        assuming that we get as input the point in time $t$ at which candidate $c_{k-\ell+1}$ is elected. 
\end{theorem}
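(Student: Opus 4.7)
The plan is to show that the minimum exogenous cost of a successful $\ell$-replacement for seq-Phragm\'en is exactly $\ell / t$, where $t$ is the time at which $c_{k-\ell+1}$ is elected in the original execution. Given $t$ as input, this is clearly computable in $\mathcal{O}(1)$ time.

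First, I would argue a structural reduction: among weight-minimal $\ell$-replacements, we may assume that every voter $v \in V'$ approves only candidates in $C'$. Indeed, if a new voter also approved some $c \in C$, that portion of their weight would be used to help an original candidate (either accelerating its election or reducing its post-reset deficit), which cannot help the goal of electing new candidates earlier. Under this assumption, the accumulation dynamics of original voters and the election times of original candidates are entirely unaffected by $(C',V',A',w')$; in particular $c_{k-\ell+1}$ would still be elected at time $t$ in the combined execution if the new candidates were ignored.

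Second, I would translate ``success'' into a timing condition. Since seq-Phragm\'en chooses candidates in chronological order and the committee has size $k$, a replacement is successful if and only if all $\ell$ candidates of $C'$ are elected no later than time $t$ (with tie-breaking resolved in favor of $C'$, as is standard in this section). Because new voters interact only with $C'$, the total virtual money accumulated by $V'$ by time $t$ equals $w(V') \cdot t$, and since each new candidate requires exactly $1$ unit of money (resetting only their supporters' balances), a necessary condition is $w(V') \cdot t \geq \ell$, i.e.\ $w(V') \geq \ell / t$.

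Finally, I would exhibit a matching construction: let $V' = \{v'\}$ with $w'(v') = \ell/t$ and $A'_{v'} = C'$. Then $v'$'s money reaches $1$ at time $t/\ell$ (electing the first new candidate and resetting), reaches $1$ again at time $2t/\ell$, and so on, so all $\ell$ candidates of $C'$ are elected by time $t$, yielding a successful replacement of weight $\ell/t$. The main subtlety, and the step I would spend most care on, is the reduction argument in the first paragraph together with the handling of simultaneous election events at time $t$: one has to verify that weight spent on approving old candidates is strictly wasteful and that the tie-breaking convention used implicitly in the rest of Section~\ref{sec:over} is consistent with counting the $\ell$-th new candidate as elected ``by'' time $t$ rather than strictly before.
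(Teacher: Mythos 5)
Your proposal is correct and follows essentially the same argument as the paper: the lower bound $w(V')\geq \ell/t$ comes from the observation that the new voters must jointly earn at least $\ell$ units of virtual money by time $t$, and a matching explicit construction certifies that this bound is attained. The only cosmetic differences are your choice of witness (a single voter of weight $\ell/t$ approving all of $C'$, versus the paper's $\ell$ voters of weight $1/t$ each approving one distinct new candidate — both work under the stated tie-breaking) and your first-paragraph reduction, which the paper sidesteps by treating $\ell$-replacements as ``properly separated'' from the original election by definition.
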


\begin{proof}
    Let $t \in \mathbb{R}$ be the point in time at which candidate $c_{k - \ell +1}$ is elected. 
    A necessary condition for an $\ell$-replacement $(C',V',A',w')$ to be successful is that voters in $V'$ earned at least $\ell$ units of money until time $t$, 
    which is the case if and only if $\sum_{v \in V'}w'(v)\geq \frac{\ell}{t}$. 
    To see that this is indeed the minimum exogenous cost of an $\ell$-replacement, consider an $\ell$-replacement that consists of $\ell$ voters of weight $\frac{1}{t}$ each, each approving exactly one (distinct) candidate in $C'$. 
    In the extended election, candidates $c_1,\dots,c_{k-\ell}$ are elected with unchanged order, and then all candidates in $C'$ will be chosen before candidate $c_{k-\ell+1}$, 
    hence the $\ell$-replacement is successful. 
\end{proof}

\begin{theorem}
    For every $\ell \in [k]$, the minimum exogenous cost of an $\ell$-replacement for Phragmms can be computed in time $\mathcal{O}(1)$, assuming that we get as input the value $\min\{s_1,\dots,s_{k-\ell}\}$, where $s_i$ is the Phragmms score of candidate $c_i$ for any $i \in [k]$.  
\end{theorem}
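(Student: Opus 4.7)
The plan is to show that the minimum exogenous cost of a successful $\ell$-replacement for Phragmms equals $\ell \cdot s_{\min}$ (attained as an infimum), from which the $O(1)$ algorithm is immediate: simply output this value. I would prove matching upper and lower bounds.

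For the upper bound, I would construct an $\ell$-replacement of weight $\ell(s_{\min} + \epsilon)$ for arbitrarily small $\epsilon > 0$: introduce $\ell$ new voters $v_1,\ldots,v_\ell$, each with weight $s_{\min}+\epsilon$ and each approving a single distinct candidate in $C'$. The crucial observation is that since voters in $V'$ approve only candidates in $C'$, the Phragmms score of any original candidate $c_j \in C$ in the extended execution depends only on the subset of $C$ currently selected, independent of which candidates in $C'$ have been selected (the $V'$-voters have no effect whatsoever on $c_j$'s supporters or balanced backing). Simultaneously, each not-yet-selected $c' \in C'$ has Phragmms score exactly $s_{\min}+\epsilon$, since its unique supporter can contribute its full weight. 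Using the monotonicity of Phragmms scores (adding a candidate to the selected set can only decrease the scores of remaining candidates), an induction over the steps shows that the extended execution selects precisely $\{c_1,\ldots,c_{k-\ell}\} \cup C'$, so the replacement is successful.

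For the lower bound, consider any $\ell$-replacement $(C',V',A',w')$ of total weight $W$. Without loss of generality voters in $V'$ approve only $C'$, since approving original candidates only redirects weight away from $C'$. The core lemma is that whenever the $j$-th $C'$-candidate $c'_j$ is selected (in chronological order of $C'$-selections) with Phragmms score $t_j$, one has $j \cdot t_j \leq W$. This is because the definition of the Phragmms score requires a balanced assignment in which $c'_j$ and all $j{-}1$ previously selected members of $C'$ simultaneously achieve backing at least $t_j$; the sum of these $j$ backings is bounded above by the total weight $W$ of $V'$, giving $j \cdot t_j \leq W$. In particular $t_\ell \leq W/\ell$, so if $W < \ell \cdot s_{\min}$ then $t_\ell < s_{\min}$. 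At the step where $c'_\ell$ is selected, a counting argument combined with monotonicity then exhibits some unselected $c_j \in \{c_1,\ldots,c_{k-\ell}\}$ whose Phragmms score is at least $s_{\min}$, contradicting that $c'_\ell$ is the max-score candidate.

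The main obstacle will be rigorously verifying the score monotonicity and carrying out the existence argument in the lower bound. Monotonicity follows because adding a selected candidate strictly tightens the constraints of the max-$t$ optimization defining the score (the new candidate must also reach backing at least $t$), but the argument must carefully handle the Phragmms rebalancing step, defined by maximizing sum of backings and minimizing sum of squared backings. The counting step requires showing that at the selection of $c'_\ell$ at some step $s \leq k$, only $s-\ell \leq k-\ell$ originals have been chosen, so either some $c_j \in \{c_1,\ldots,c_{k-\ell}\}$ is still unselected or $s = k$ and the remaining originals are exactly $\{c_{k-\ell+1},\ldots,c_k\}$; combined with monotonicity and the fact that $s_j \geq s_{\min}$ for all $j \leq k-\ell$ in the original execution, this yields the desired unselected candidate with score at least $s_{\min}$. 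Tiebreaking at $W = \ell \cdot s_{\min}$ is handled by the infinitesimal $\epsilon$ in the upper-bound construction, so the minimum is attained as an infimum.
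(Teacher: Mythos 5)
Your overall strategy mirrors the paper's: both arguments rest on (i) the observation that a properly separated replacement does not interact with the original instance, so the extended execution is a merge of the two standalone score sequences, and (ii) the bound $j\cdot t_j\le W$ for the $j$-th selected replacement candidate, which is exactly the paper's appeal to the balanced-assignment property (Lemma~18 of Cevallos--Stewart); your upper-bound construction with $\ell$ single-approval voters is even the alternative construction the paper mentions at the end of its proof. However, your lower bound contains a genuine gap, and the threshold you derive is not the true minimum. The flaw is in the purported contradiction: you argue that if $t_\ell<s_{\min}=\min\{s_1,\dots,s_{k-\ell}\}$, then at the moment $c'_\ell$ is selected some unselected $c_j$ with $j\le k-\ell$ has a higher score. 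But the replacement does not need $c'_\ell$ to outscore any of $c_1,\dots,c_{k-\ell}$ --- those candidates fit into the committee alongside all of $C'$, since $(k-\ell)+\ell=k$. The case you flag yourself, where $c'_\ell$ is selected at step $s=k$ after all of $c_1,\dots,c_{k-\ell}$ and $c'_1,\dots,c'_{\ell-1}$, is precisely the case in which the replacement succeeds with no contradiction available: the only remaining competitor at that step is $c_{k-\ell+1}$, whose score $s_{k-\ell+1}$ may be strictly below $s_{k-\ell}$. Success is therefore equivalent to $c'_\ell$ preceding $c_{k-\ell+1}$ in the merge, i.e., to $s'_\ell\ge s_{k-\ell+1}$, and the minimum cost is $\ell\cdot\min\{s_1,\dots,s_{k-\ell+1}\}=\ell\cdot s_{k-\ell+1}$, which can be strictly smaller than your claimed $\ell\cdot s_{k-\ell}$.

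In fairness, the theorem statement itself lists the input as $\min\{s_1,\dots,s_{k-\ell}\}$ while the paper's proof concludes with $\min\{s_1,\dots,s_{k-\ell+1}\}$, so you may have been led astray by that index; but your proof must be judged on its internal logic, and the lower bound as written does not close. To repair it, replace the comparison set $\{c_1,\dots,c_{k-\ell}\}$ by the single candidate $c_{k-\ell+1}$: if $W<\ell\cdot s_{k-\ell+1}$ then $t_\ell\le W/\ell<s_{k-\ell+1}\le s_i$ for all $i\le k-\ell+1$, so at least $k-\ell+1$ originals precede $c'_\ell$ in the merge and $c'_\ell$ cannot appear among the first $k$ selections. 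Your upper-bound construction remains valid (it is merely non-tight), and the rest of the argument, including the non-interaction claim and the $j\cdot t_j\le W$ lemma, is sound modulo the verification of score monotonicity that you correctly identify as needing care.
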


\begin{proof}
    For the definition of candidate scores in Phragmms, we refer to~\citet{CeSt21a}. 
    For election $(C,V,A,w,k)$, recall that any $\ell$-replacement $(C',V',A',w')$ is ``properly separated,'' in the sense that $C\cap C'=\emptyset$ and $A\cap A'=\emptyset$. 
    As a consequence (and similarly to the previous election rules), the outcome of Phragmms can be computed by merging its executions on each instance separately. 
    More precisely, if $c_1, \dots, c_{k}$ are the elected candidates in $(C,V,A,w,k)$ labeled by election order, and $s_i$ is the score of $c_i$ at the time of its election, 
    and similarly if $(c'_1,s'_1), \dots, (c'_\ell,s'_\ell)$ are the candidates and their scores when we run the election $(V',C',A', w', \ell)$, 
    then we claim that in an execution of Phragmms on the extended instance $(V \cup V',C \cup C',A \cup A', w \cup w',k)$, 
    the selection order is determined by starting with the empty committee and repeating the following process $k$ times:
    Let $i \in [k]$ be the smallest index of an unelected candidate in $\{c_1,\dots,c_{k}\}$ 
    and let $j \in [\ell]$ be the smallest index of an unelected candidate in $\{c'_1, \dots, c'_{\ell}\}$; 
    if $s_i > s'_j$, elect $c_i$, otherwise elect $c_j$.
    To prove the claim, it suffices to verify that for each subinstance of the extended instance, neither the updating of the scores nor the rebalancing step of the vote assignment 
    is influenced by the other subinstance. This fact can be easily checked when going through Section 4 in~\citet{CeSt21a}. 

    Having established this claim, we aim to find a weight-minimal $\ell$-replacement. We first show that we can assume without loss of generality, that such a replacement has a simple structure, namely, consists of one voter who approves all of the $\ell$ candidates in $C'$. To see why, consider any $\ell$-replacement of some weight $x:= \sum_{v \in V'} w'(v)$, and let $s'_1, \dots, s'_\ell$ be the corresponding score sequence. Lemma 18 by \citet{CeSt21a} states that, at every iteration $i$, the minimum backing weight of the vote assignment after iteration $i-1$ is at least as the score $s'_i$. Moreover, since no backing weight is reduced below $s'_i$ in iteration $i$, this implies an upper bound of $x/i$ on $s'_i$. 
    Considering the trivial $\ell$-replacement of weight $x$, i.e., one voter of weight $x$ approving all candidates in $C'$, we observe that the resulting score sequence corresponds exactly to $x/1,x/2,\dots,x/\ell$.
    Therefore, the existence of some $\ell$-replacement of weight $x$ implies the fact that there also exists such a trivial $\ell$-replacement of weight $x$. 

    It remains to determine the minimum weight of some $\ell$-replacement. For now, we call this weight $x$. Any $\ell$-replacement is successful if and only if candidate $c'_{\ell}$ gets elected before candidate $c_{k-\ell+1}$. This is the case if and only if $s'_{\ell} = \frac{x}{\ell} \geq \min\{s_1,\dots,s_{k-\ell+1}\}$. 
    Thus, a minimum $\ell$-replacement has weight $$z = \ell \cdot \min\{s_1,\dots,s_{k-\ell+1}\},$$ proving the claim. Lastly, we remark that due to the last argument, the following $\ell$-replacement is successful as well: There are $\ell$ voters, each with weight $\frac{z}{\ell}$ and each approving a distinct candidate in $C'$.
\end{proof}

\paragraph{Sequential PAV} Maybe surprisingly, the task of computing the minimum exogenous cost for replacing $\ell$ candidates for seq-PAV leads to an interesting theoretical question, which may be of independent interest for future work. In the following, we report on our observations so far, and provide a linear program with which we were able to solve the question for all $\ell \le 21$. 

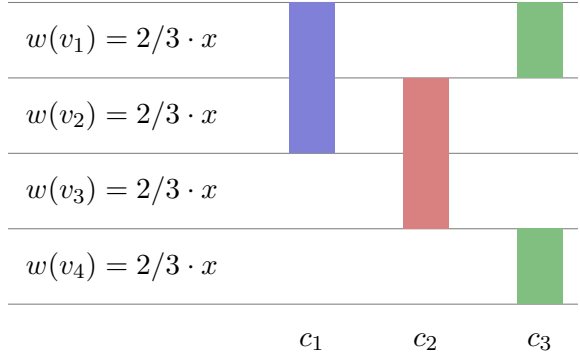
\begin{figure}[t!]
    \centering
    \begin{tikzpicture}
        \node at (-1,0) {$w(v_4) = 2/3 \cdot x$}; 
        \node at (-1,1) {$w(v_3) = 2/3 \cdot x$}; 
        \node at (-1,2) {$w(v_2) = 2/3 \cdot x$}; 
        \node at (-1,3) {$w(v_1) = 2/3 \cdot x$}; 
        \draw[black!50] (-2.5,0.5) -- (5,0.5); 
        \draw[black!50] (-2.5,1.5) -- (5,1.5); 
        \draw[black!50] (-2.5,2.5) -- (5,2.5); 
        \draw[black!50] (-2.5,3.5) -- (5,3.5); 
        \draw[black!50] (-2.5,-.5) -- (5,-.5); 
        \node at (1.5,-1) {$c_1$}; 
        \node at (3,-1) {$c_2$}; 
        \node at (4.5,-1) {$c_3$}; 
        \filldraw [fill=blue!70!black!50!white,draw=none] (1.2,3.5) rectangle (1.8,1.5);
        \filldraw [fill=red!70!black!50!white,draw=none] (2.7,0.5) rectangle (3.3,2.5);
        \filldraw [fill=green!50!black!50!white,draw=none] (4.2,-.5) rectangle (4.8,0.5);
        \filldraw [fill=green!50!black!50!white,draw=none] (4.2,3.5) rectangle (4.8,2.5);
    \end{tikzpicture}
    \caption{Illustration of an optimal $3$-replacement for Seq-PAV, with a total weight $\frac{8}{3} x$, in which the $3^{\text{rd}}$ candidate has marginal sequential PAV score of $x$. The approval ballots are interpreted as follows: $A_{v_1} = \{c_1,c_3\}, A_{v_2} = \{c_1,c_2\}, A_{v_3} = \{c_2\}$ and $A_{v_4} = \{c_3\}$.}
    \label{fig:SeqPAV-example}
\end{figure}

For a given election $(C,V,A,w,k)$, let $c_i$ be the $i^{\text{th}}$ candidate that is selected by seqPAV and let $s_i$ be the marginal PAV score of candidate $c_i$ at the time of selection. More precisely, this means that $$s_i = \text{sc}(\{c_1,\dots,c_{i}\}) - \text{sc}(\{c_1,\dots,c_{i-1}\}),$$ where we recall that the function $\text{sc}$ is defined as $$\text{sc}(W) = \sum_{v \in V} w(v) \sum_{j=1}^{|A_v \cap W|} \frac{1}{j}.$$ Similarly, for any (potential) $\ell$-replacement $(C',V',A',w')$, let $c'_i$ be the candidate selected at point $i \in [\ell]$ in the election $(C',V',A',w')$ and $s'_i$ be the marginal PAV score of this candidate at the point of selection. Then, the $\ell$-replacement is successful, if and only if $s_{k-\ell +1} < s'_{\ell}$. The reason for this is twofold: (1) Both sequences $s_1, \dots, s_k$ and $s'_1, \dots, s'_{\ell}$ are monotone. (2) Clearly, when running the extended election, the two subelections behave completely independently from each other. 

Hence, the question of computing the minimum exogenous cost for replacing $\ell$ candidates reduces to the following task: Given $\ell \in \mathbb{N}$ and $x:=s_{k-\ell+1} \in \mathbb{R}$, find an election $(C',V',A',w',\ell)$ of minimum weight, such that the marginal contribution of the $\ell^{\text{th}}$ selected candidate is at least $x$. For all previous voting rules, we were always able to restrict ourselves to $\ell$-replacments with a very simple structure, e.g., one voter approving all candidates in $C'$. Let us try the same by considering two natural simple election structures in the context of seq-PAV. In the first, consider one voter $v$ with weight $w(v)$ approving all $\ell$ candidates in $C'$. Then, the score sequence corresponds to $s'_1 = w(v), \dots, s_{\ell}=\frac{w(v)}{\ell}$. Hence the minimum successful weight when we restrict ourselves to this election format is $w(v) = \ell \cdot x$. Similarly, if we consider $\ell$ voters, all of which approve one of the candidates in $C'$ each, then the score sequence corresponds to $s'_1=w(v_1), \dots,s'_{\ell}=w(v_{\ell})$, hence, the minimum weight also corresponds to $\sum_{v \in V'} w(v) = \ell \cdot x$. Given those two extremes, one might be tempted to conjecture that $\ell \cdot x$ is the correct answer, however, this is not the case since there can be non-trivial synergy effects between the candidates. For intuition, we give an example for $\ell = 3$ in \Cref{fig:SeqPAV-example}. 

When constructing minimum $\ell$-replacements for seq-PAV, there is a non-trivial trade-off between reusing the (reduced) approval weight of already selected candidates and not violating the greedy selection order. In the following, we describe a linear program\footnote{The underlying ideas of this LP are similar to those used by~\citet{SFF+17a} to prove that seq-PAV satisfies JR when $k \leq 5$, but fails it when $k \geq$ 6.} that optimally solves this trade-off, albeit having exponentially many variables. Before we define the LP, we define for any $S \subseteq C'$ and $i \in [\ell]$ the value $$t(S,i) = (|S \cap \{c_1, \dots, c_{i-1}\}| + 1)^{-1}.$$ The value corresponds to the marginal contribution of a voter to the PAV score at the moment when candidate $c_i$ is elected, assuming that the voter has weight one and approves all candidates in $S$. The LP has one variable per subset $S \subseteq C'$, called $y(S)$, which we interpret as the weight of the voter approving exactly the candidates in $S$. Note that this entirely describes any election with candidate set $C'$. We arbitrarily index the elements in $C'$ by $c_1, \dots, c_{\ell}$ and the first set of inequalities of the LP enforces that the candidates are elected by seq-PAV in this order. That is, for any pair $i,j \in [\ell]$ with $i < j$, the marginal contribution of $i$ (at iteration $i$) is at least as large as the marginal contribution of $j$ at iteration $i$. Lastly, the other non-trivial constraint implies that candidate $c_{\ell}$ has marginal contribution at least $x$ in iteration $\ell$. 

\begin{align*}
    \text{min} & \sum_{S \subseteq C'} y(S) \\
     \sum_{S \subseteq C', c_i \in S} t(S,i) \cdot y(S) & \geq \sum_{S \subseteq C', c_j \in S} t(S,i) \cdot y(S) \\ & \quad \quad \quad \quad \quad \;\; \forall \; i,j \in [\ell], j > i\\ 
     \sum_{S \subseteq C', c_{\ell} \in S} t(S,\ell) \cdot y(S) & \geq x \\
     y(S) & \geq 0 \quad \quad \quad \quad \forall \; S \subseteq C'
\end{align*}

\begin{figure}[t!]
    \centering
    \includegraphics[scale=0.5]{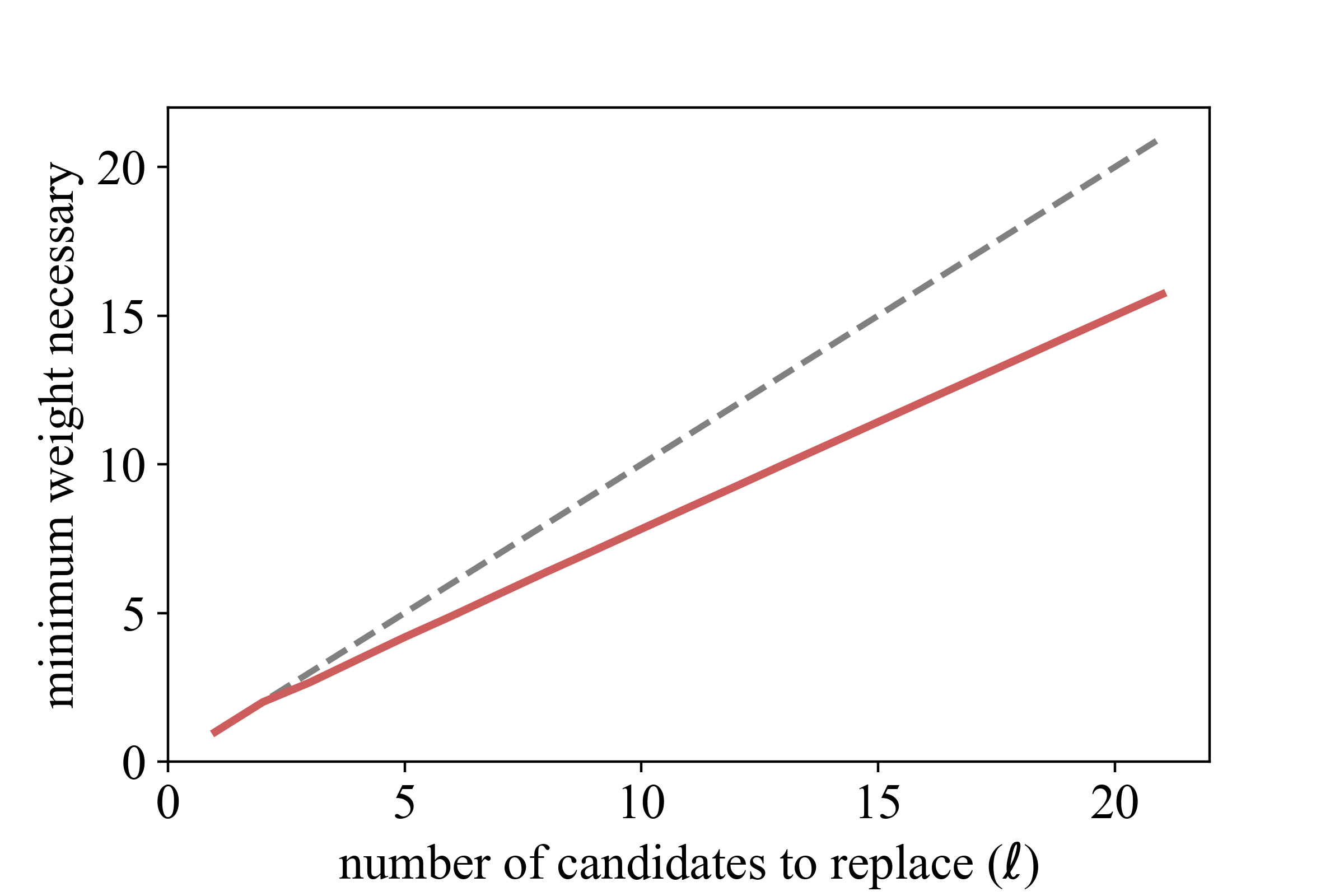}
    \caption{Minimum exogenous cost for replacing $\ell$ candidates when normalizing the marginal PAV score of the $k -\ell +1^{\text{th}}$ candidate in the original election to be one. The grey dashed line indicates the identity, which corresponds to the minimum exogenous cost if we restrict ourselves to $\ell$-attacks with simple structures (e.g., those with only one voter or those with voters approving one candidate each).}
    \label{fig:seqPAV-LP}
\end{figure}

Note that this LP is invariant to scaling $x$ since an optimal solution $y \in \mathbb{R}^{[0,1]^\ell}$ for input $x =1$ lets us derive an optimal solution $y'$ for any $x'$ by setting $y'(S) = x' \cdot y(S)$ for all $ S\subseteq C'$. Hence, we restricted ourselves to the case $x=1$ and computed the solution of the LP for all values $\ell \in [21]$. We report these values in \Cref{fig:seqPAV-LP}. 

The formulation as an LP also provides us with a non-trivial structural insight. That is, there always exists an optimal $\ell$-replacement, with at most $\frac{\ell(\ell-1)}{2} + 1$ voters. We obtain this bound by the following observation: The LP has $2^{\ell}$ variables, $2^{\ell}$ ``trivial'' constraints (those of form $y(S)\geq 0$) and $\frac{\ell (\ell -1)}{2} + 1$ non-trivial constraints. Now, for any extreme point of the LP, there have to be at least $2^{\ell}$ tight constraints. Hence, at least $2^{\ell} - \frac{\ell (\ell -1)}{2} + 1$ of the trivial constraints have to be tight, implying an upper bound of at most $\frac{\ell(\ell-1)}{2} + 1$ non-zero variables. Nevertheless, finding these variables (corresponding to voters) may still be a challenging task. 

Lastly, we remark that the LP might be solvable in polynomial time by a standard \emph{column generation} approach. That is, since our LP has polynomial many constraints, deriving the dual LP results in an LP with polynomial many variables and exponential many constraints. Now, if we could find an algorithm for the corresponding \emph{separation problem} (i.e., given a solution to the dual LP, determine in polynomial time whether a constraint is violated), then we could employ the Ellipsoid method, to solve the problem in polynomial time. In fact, we derived the dual LP and considered the separation problem, however, we were not able to find a polynomial-time algorithm nor show hardness of the separation problem. 
 
\paragraph{Method of Equal Shares} For MES, the situation is less clear, as the initial budget of a voter depends on the total weight of voters. Thus, by adding new voters the whole execution of the rule might change. Even worse, MES is not candidate monotone with additional voters \citep{LaSk22a}, which means that adding an additional supporter for a currently selected candidate can actually lead to the candidate no longer being selected.

\begin{table*}[t!]
    \centering
    \resizebox{0.8\textwidth}{!}{
    
    \begin{tabular}{l|cc|cc|cc|cc|cc|ccccccc}
\toprule
{} &  \multicolumn{2}{c|}{PAV} &  \multicolumn{2}{c|}{max. support} &  \multicolumn{4}{c|}{min. average satisfaction} & \multicolumn{2}{c|}{priceability}   \\
{} &   \multicolumn{2}{c|}{score} & \multicolumn{2}{c|}{loser} &  \multicolumn{2}{c|}{$\ell=1$} & \multicolumn{2}{c|}{$\ell=5$}  &  \multicolumn{2}{c}{gap}  & \\
\midrule\midrule
      & {\footnotesize $k=300$} & {\footnotesize P-APP} &  {\footnotesize $k=300$} & {\footnotesize P-APP} &{\footnotesize $k=300$} & {\footnotesize P-APP}& {\footnotesize $k=300$} & {\footnotesize P-APP}& {\footnotesize $k=300$} & {\footnotesize P-APP} \\ 
      & {\footnotesize $k=200$} &  {\footnotesize $k=400$} & {\footnotesize $k=200$} &  {\footnotesize $k=400$} & {\footnotesize $k=200$} &  {\footnotesize $k=400$}& {\footnotesize $k=200$} &  {\footnotesize $k=400$}& {\footnotesize $k=200$} &  {\footnotesize $k=400$} \\
      & {\footnotesize $k=250$} &  {\footnotesize $k=350$} & {\footnotesize $k=250$} &  {\footnotesize $k=350$} & {\footnotesize $k=250$} &  {\footnotesize $k=350$}& {\footnotesize $k=250$} &  {\footnotesize $k=350$}& {\footnotesize $k=250$} &  {\footnotesize $k=350$} \\\midrule\midrule
\multirow{3}{*}{AV}           & 2.502  &   & 0.011 &   & 0.000 & & - & & 6.21 &   \\
       & 2.075  &  2.690 & 0.021 & 0.0038  & 0.00 & 0.120& 3.357 & - &9.421  &  3.503 \\
       & 2.324 & 2.624&0.017 &0.0066  & 0 & 0 & $-$ & $-$ & 7.621 & 6.301\\ \midrule
\multirow{3}{*}{SAV}            & 2.547 &   & 0.022  &   & 0.011 & & 4.359 & & 7.14 &   \\
       & 2.082  & 2.715  & 0.035  &  0.006 & 0.00 &3.145 & 0.79 & - &10.234  &  1.854 \\ 
       & 2.348 & 2.658&0.027 & 0.0120 & 0.0014 & 0.67 &0.920 &12.02 & 8.886 &4.464 \\\midrule
      \multirow{3}{*}{seq-PAV}           & 2.585  & 2.803   &0.028  & 0.084  &1.092 &0.86 & 7.442 & 5.171 &0.53  & -0.032  \\
       & 2.284 &  2.718 & 0.044 &  0.0076 & 0.99 &5.069 & 6.677 &14.277 & 0.640 &  -0.240 \\ 
       &2.452 & 2.673& 0.042&0.0150 &1.05 &2.123 &6.722 &11.08 &0.63 &0.15 \\\midrule
      \multirow{3}{*}{Phragmms}            & 2.578  & 2.777  & 0.039 &  0.084 & 0.960 &0.56 & 6.840 & 4.610& -0.12  & -0.033  \\
       & 2.271 & 2.712  & 0.043 &  0.0110 & 0.74 & 1.901& 6.652 & 9.826& -0.083 &  -0.550 \\ 
       &2.441 & 2.666& 0.043 &0.0200  & 0.84 &1.047 & 6.315&10.032  &-0.042 &-0.210 \\\midrule
      \multirow{3}{*}{seq-Phrag.}            &    2.578 & 2.802&0.039  & 0.084  & 0.990 & 0.75 & 6.678 & 5.180&  -0.16&  -0.035 \\
       & 2.277 & 2.713  & 0.043 &  0.0110 & 0.91 &2.422 & 6.324 &11.305 & -0.078 & -0.580  \\ 
       &2.443 & 2.667&0.043 &0.0210 &1.0010 &1.111 &6.25 & 9.635&-0.044 &-0.240\\\midrule
      \multirow{3}{*}{MES}           & 2.579    &2.777& 0.039  & 0.084 & 0.980 & 0.71 & 6.773 & 5.110 & -0.17 & -0.033\\
       & 2.276  &  2.713 & 0.043 &  0.0110 & 0.94 &2.460 & 6.316 & 11.238&-0.077  & -0.590  \\
       & 2.443& 2.668&0.043 &0.0210 &0.99 & 1.12&6.311  &9.731 &-0.042 & -0.24\\ \midrule      
\bottomrule
\end{tabular}    
}
    \caption{Average values for different notions related to preventing underrepresentation. }
    \label{tab:underreptrends}
\end{table*}

\begin{table*}[t!]
    \centering
    
    \resizebox{1\textwidth}{!}{
    \begin{tabular}{lcc|cc|cc|cc|cc|cc|cc}
\toprule
{} &  \multicolumn{2}{c|}{maximin} &  \multicolumn{2}{c|}{maximin} &  \multicolumn{2}{c|}{min. support} & \multicolumn{4}{c|}{cost of replacing} & \multicolumn{4}{c|}{min. approval weight}   \\
{} &   \multicolumn{2}{c|}{value} & \multicolumn{2}{c|}{variance} &  \multicolumn{2}{c|}{winner} & \multicolumn{2}{c|}{one-third}  &  \multicolumn{2}{c|}{half}  & \multicolumn{2}{c|}{one-third}  &  \multicolumn{2}{c|}{half} \\ \midrule\midrule
      & {\footnotesize $k=300$} & {\footnotesize P-APP} &  {\footnotesize $k=300$} & {\footnotesize P-APP} &{\footnotesize $k=300$} & {\footnotesize P-APP}& {\footnotesize $k=300$} & {\footnotesize P-APP}& {\footnotesize $k=300$} & {\footnotesize P-APP}& {\footnotesize $k=300$} & {\footnotesize P-APP}& {\footnotesize $k=300$} & {\footnotesize P-APP} \\ 
      & {\footnotesize $k=200$} &  {\footnotesize $k=400$} & {\footnotesize $k=200$} &  {\footnotesize $k=400$} & {\footnotesize $k=200$} &  {\footnotesize $k=400$}& {\footnotesize $k=200$} &  {\footnotesize $k=400$}& {\footnotesize $k=200$} &  {\footnotesize $k=400$}& {\footnotesize $k=200$} &  {\footnotesize $k=400$}& {\footnotesize $k=200$} &  {\footnotesize $k=400$} \\
      & {\footnotesize $k=250$} &  {\footnotesize $k=350$} & {\footnotesize $k=250$} &  {\footnotesize $k=350$} & {\footnotesize $k=250$} &  {\footnotesize $k=350$}& {\footnotesize $k=250$} &  {\footnotesize $k=350$}& {\footnotesize $k=250$} &  {\footnotesize $k=350$}& {\footnotesize $k=250$} &  {\footnotesize $k=350$}& {\footnotesize $k=250$} &  {\footnotesize $k=350$} \\\midrule\midrule
\multirow{3}{*}{AV}           & 0.0015 &   & 0.0047&   & 0.0110 &  &   0.14  &  & 0.27 & & 0.22 & & 0.36 & \\
      & 0.0018 & 0.00076  & 0.0060 &  0.0056 & 0.0210 & 0.0038 &0.14  & 0.14 & 0.24  & 0.27 & 0.16 & 0.22 & 0.26 & 0.35 \\ 
      & 0.0017 &0.00087 &0.0049 & 0.0057& 0.017 &0.0066 & 0.15 &0.14 & 0.24 & 0.26& $-$ & $-$ &$-$ &$-$ \\\midrule
\multirow{3}{*}{SAV}           &0.0018  &   & 0.0043 &   &0.0024  &  & 0.24 &  & 0.4 &  & 0.24 & & 0.38 &  \\
       &0.0026  & 0.00068   & 0.0051& 0.0060  & 0.0028 & 0.0017 & 0.18 &  0.25 & 0.29 & 0.48 & 0.18 & 0.23 &  0.29 &  0.38  \\ 
      &0.0023 &0.00120 &0.0043 & 0.0049&0.0028 &0.002 &0.21  &0.26 &0.35  &0.45 & $-$& $-$&$-$ &$-$ \\ \midrule
\multirow{3}{*}{seq-PAV}           & 0.0024 & 0.003  & 0.0031& 0.00063  & 0.0028 & 0.0032 & ? & ? & ? & -?& 0.26 & 0.31 & 0.42 & 0.47  \\
       &  0.0038  & 0.00074  & 0.0031&  0.0066 & 0.0051 &0.0018  & ? &  ? & ? & ? & 0.28 & 0.23 & 0.44 & 0.38 \\
       &0.0030 &0.0014 &0.0030 &0.0044 & 0.0032 & 0.002 & ? & ? & ? & ? &$-$ &$-$ &$-$ &$-$ \\ \midrule
\multirow{3}{*}{Phragmms}           &0.00272  & 0.003  & 0.0023& 0.00056  & 0.0028 & 0.0032 & 0.4 & 0.42  & 0.79 &  0.82 & 0.28 & 0.31 & 0.44 & 0.46 \\
       &0.0040 &  0.00140 & 0.0023& 0.0040  & 0.0058 & 0.0020 & 0.38 &  0.4 & 0.74 & 0.8 & 0.29 & 0.25 & 0.45 & 0.38   \\
       & 0.0033 &0.002 & 0.0021 &0.0032 & 0.0036 &0.0027  & 0.39 &0.4 & 0.77 &0.79 & $-$&$-$ &$-$ &$-$ \\ \midrule
\multirow{3}{*}{seq-Phag.}           & 0.0027 & 0.003  & 0.0024& 0.00066   & 0.0028 & 0.0032 & 0.4 & 0.42  & 0.79 &  0.82 & 0.28 & 0.31 & 0.44 &  0.47 \\
       & 0.0041 & 0.00120  &0.0021 & 0.0042  & 0.0058 &  0.0020 & 0.39 &  0.4 & 0.75 & 0.8 & 0.29 & 0.25 & 0.45 & 0.38  \\ 
       & 0.0033 & 0.0019 &0.0021 &0.0033 &0.0036 &0.0026 &0.4 &0.4  &0.78 &0.8 & & & & \\ \midrule
\multirow{3}{*}{MES}           & 0.00269 & 0.003  & 0.0024& 0.00055  & 0.0028 & 0.0032 & ? & ? & ? & ? & 0.28 & 0.31 & 0.44 & 0.47  \\
       & 0.0041 & 0.00120  & 0.0021& 0.0042  & 0.0057 & 0.0019 &  ? & ? &  ? & ? & 0.28 & 0.25 & 0.45 & 0.38 \\ 
       &0.0032  &0.0019 &0.0022 &0.0033 &0.0036 &0.0026 & ? & ? & ? & ? & $-$&$-$ &$-$ &$-$ \\ \midrule
      
\bottomrule
\end{tabular}   } 

    \caption{Average values for different notions related to preventing overrepresentation. Entries marked with  ``$-$'' were not computed due to a shortage of computation time. ``Maximin variance'' stands for the variance of the candidate's backing weight in the maximin assignment. }
    \label{tab:overreptrends}
\end{table*}

\section{Additional Material for \Cref{sec:desdec}}
\label{app:desdec}

\Cref{tab:underreptrends} (on page \pageref{tab:underreptrends}) shows some of our measures regarding the prevention of underrepresentation. We compare different committee sizes $k\in \{200,250,300,350,400\}$ and whether candidates can be selected multiple times for $k=300$ (following the terminology of \citet{BGP+23a}, we call this setting ``P-APP'' setting; also see their work for a formal definition). 
\Cref{tab:overreptrends} shows the same comparison for measures regarding overrepresentation.  

\paragraph{Changing the Committee Size.}
All rules are similarly affected by changes in the committee size, which is why we only report  general trends focusing on the reported average values. 
Increasing the committee size from $300$ to $400$ typically leads to 
\begin{itemize}
    \item a small ($\approx7.5\%$) increase of the PAV score,
    \item a substantial (typically at least $\approx100\%$) increase of the minimum average satisfaction, 
    \item a substantial decrease in the priceability gap, 
    \item a decrease of the maximum support value by typically at least around $50\%$, 
    \item no significant change in terms of the cost of replacing, and 
    \item a small ($\approx5\%$) decrease of the minimum approval weight of size-$\frac{k}{3}$ and size-$\frac{k}{2}$ subsets of the committee. 
\end{itemize}

Decreasing the committee size from $300$ to $200$ typically leads to 
\begin{itemize}
    \item a at least $11\%$ decrease of the PAV score,
    \item small drops in the average satisfaction for the proportional rules, 
    \item a small increase in the priceability gap, 
    \item an around $\approx50\%$ increase of the maximin value for the proportional rules, yet  no strong increase for AV,
    \item no significant change of the exogenous cost of replacing (except for SAV where values drop by $\approx25\%$), and 
    \item no significant change of the minimum approval weight of size-$\frac{k}{3}$ and size-$\frac{k}{2}$ subsets of the committee, except for AV and SAV where values drop by around $25\%$.     
\end{itemize}

Results for $k=350$ almost always lie in between those for $k=300$ and $k=400$, and results for $k=350$ in between those for $k=200$ and $k=300$. In particular, all reported measures are either not strongly affected by the committee size or behave monotonically. 

\paragraph{Selecting Candidates Multiple Times.}
AV and SAV do not admit natural translations to the P-APP setting, which is why we only report results on the four proportional rules here.
Notably, for the definition of $\ell$-supporting groups and priceability, all candidates are viewed as being non-selected (as we could always include an additional copy of a candidate in the committee). 
In general, we observe that allowing candidates to be selected multiple times leads to 
\begin{itemize}
    \item an increase in the PAV score (even when compared to the values for $k=400$),
    \item a decrease in the minimum average satisfaction of $\ell$-supporting groups (which, as discussed in the main body, can be explained because now also candidates which are included in the committee count as non-selected), 
    \item an increase in the priceability gap (again same explanation as for $\ell$-supporting groups), 
    \item an increase of the maximin support value (which is roughly at the level as for $k=250$), 
    \item a substantial decrease of the maximin variance by around $75\%$, which means that the backing weight of candidates is much more uniform here, and 
    \item a small increase in the cost of replacing and in the minimum approval weight of size-$\frac{k}{3}$ and size-$\frac{k}{2}$ subsets of the committee.
\end{itemize}
\FloatBarrier

\section{Kusama Elections} \label{kusama}

\begin{table*}[!t]
    \centering
    \begin{minipage}{0.34\textwidth}
  \centering
   \includegraphics[width=0.9\textwidth]{./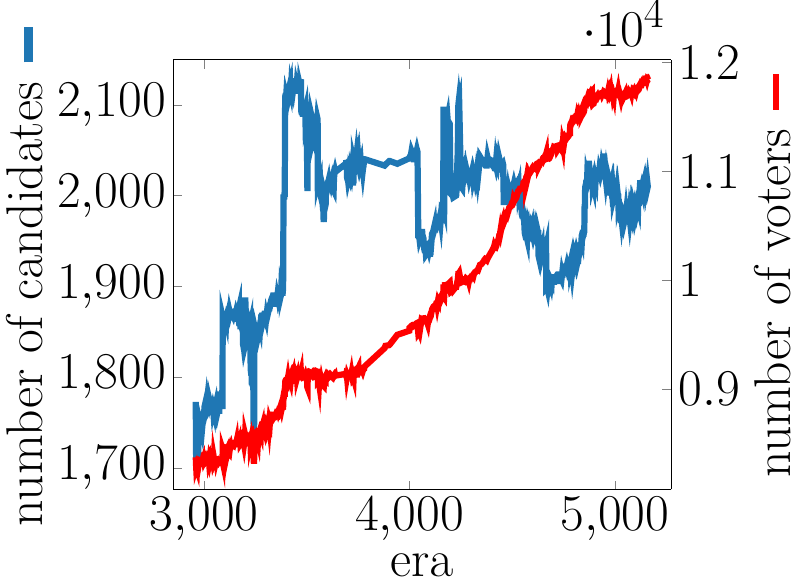}
  \captionof{figure}{Number of candidates (blue) and voters (red) in our $1520$ Kusama elections.}
\label{fig:basic1-app}
    \end{minipage} \hfill
    \begin{minipage}{.64\textwidth}
    \centering
  \resizebox{0.95\textwidth}{!}{\begin{tabular}{lcccccc}
\toprule
{} &       AV &      SAV &   seq-PAV &  Phragmms &  seq-Phrag. &      MES \\
\midrule
AV          &    $-$ &  913.50 &  912.65 &    931.05 &       916.75 &  913.65 \\
SAV         &  913.50 &   $-$ &  972.05 &    938.05 &       945.35 &  944.60 \\
seq-PAV      &  912.65 &  972.05 &    $-$ &    941.40 &       956.85 &  957.30 \\
Phragmms    &  931.05 &  938.05 &  941.40 &      $-$ &       951.85 &  948.10 \\
seq-Phrag. &  916.75 &  945.35 &  956.85 &    951.85 &         $-$ &  984.60 \\
MES        &  913.65 &  944.60 &  957.30 &    948.10 &       984.60 &    $-$\\
\bottomrule
\end{tabular}}
    \caption{Average overlap between committees returned by different voting rules. The committee size is $1000$.}
    \label{tab:overlapKus}
    \end{minipage}
\end{table*}

\begin{table*}[!t]
    \centering
  \resizebox{\textwidth}{!}{   \begin{tabular}{lcccccccccc}
\toprule
{} &  weighted & PAV  &  max. approval weight  & JR  &  EJR+  &   \multicolumn{3}{c}{min. average satisfaction} &  priceability  & priceability  \\
{} &  satisfaction &   score &  loser &   violations &   violations &   $\ell=1$ &  $\ell=5$ &  $\ell=10$ &   violations &  gap \\
\midrule
AV          &                 14.966 &      3.042 &                    0.0042 &          35.95 &             76.4 &            0.000 &            1.479 &               $-$ &                   148.25 &             7.225 \\
SAV         &                 14.608 &      3.084 &                    0.0100 &          20.25 &             43.5 &            0.430 &            3.995 &            15.761 &                   137.90 &             9.702 \\
seq-PAV      &                 14.553 &      3.102 &                    0.0120 &           0.00 &              0.0 &            2.178 &            9.192 &            17.041 &                    64.45 &             4.358 \\
Phragmms    &                 14.586 &      3.090 &                    0.0150 &           0.00 &              0.0 &            1.113 &            7.092 &            14.250 &                     0.00 &            -0.060 \\
seq-Phrag. &                 14.508 &      3.094 &                    0.0150 &           0.00 &              0.0 &            1.526 &            7.620 &            14.514 &                     0.00 &            -0.095 \\
MES        &                 14.476 &      3.094 &                    0.0150 &           0.00 &              0.0 &            1.526 &            7.738 &            14.728 &                     0.00 &            -0.071 \\
\bottomrule
\end{tabular}  }
    \caption{Average values for different notions related to preventing underrepresentation.}\label{underrep-kusama}
\end{table*}

\begin{table}[!t]
    \centering
    \begin{tabular}{llllllr}
\toprule
{} 			   &   \multirowcell{2}[0pt][l]{maximin\\support\\value} &	\multirowcell{2}[0pt][l]{min. appr.\\weight of\\winner} 	& \multicolumn{3}{c}{cost of ``replacing'' $\ell$} \\
\cmidrule{4-6   }
{} 			      &   		   		&  		&  	$\ell=1$ & $\ell=\frac{k}3$ & $\ell=\frac{k}2$ \\
\addlinespace[0.5em] \midrule
AV          &           0.00047 &         0.00420                &        0.0042 &0.22 &       0.38   \\
SAV         &           0.00055 &        0.00079                &       0.00047 &  0.24&       0.41              \\
seq-PAV      &           0.00062 &         0.00079                &     ? & ? &     ?  \\
Phragmms    &           0.00080 &         0.00082                &       0.0008 & 0.43 &      0.85   \\
seq-Phrag. &           0.00078 &         0.00081                &       0.00078 &  0.43  &    0.86  \\
MES         &           0.00079 &         0.00082                &           ? &           ? &                ?  \\
\bottomrule
\end{tabular}
    \caption{Measures related to overrepresentation.} 
\label{tab:overrep-kusama}
\end{table}

\begin{figure*}[t]
  \centering
  \resizebox{0.33\textwidth}{!}{
\begin{tikzpicture}[every plot/.append style={line width=2.2pt}]

\definecolor{crimson2143940}{RGB}{214,39,40}
\definecolor{darkgray176}{RGB}{176,176,176}
\definecolor{darkorange25512714}{RGB}{255,127,14}
\definecolor{forestgreen4416044}{RGB}{44,160,44}
\definecolor{mediumpurple148103189}{RGB}{148,103,189}
\definecolor{orchid227119194}{RGB}{227,119,194}
\definecolor{sienna1408675}{RGB}{140,86,75}
\definecolor{steelblue31119180}{RGB}{31,119,180}

\begin{axis}[
legend columns=2, 
legend cell align={left},
legend style={
  fill opacity=0.8,
  draw opacity=1,
  draw=none,
  text opacity=1,
  at={(0.5,1.35)},
  line width=3pt,
  anchor=north,
   /tikz/column 2/.style={
  	column sep=10pt,
  }, font=\LARGE
},
legend entries={AV,
	seq-Phrag.,
	SAV, Phragmms,
	seq-PAV, MES},
tick align=outside,
tick pos=left,
x grid style={darkgray176},
xlabel={$\ell$-supporting groups},
xmin=0.4, xmax=13.6,
xtick style={color=black},
y grid style={darkgray176},
ylabel={minimum average satisfaction},
ymin=-0.749999973283545, ymax=15.7499994389544,
ytick style={color=black},every tick label/.append style={font=\LARGE}, 
label style={font=\LARGE},
xticklabels={$2$,$4$,$6$,$8$,$10$,$12$,$14$},
xtick={2,4,6,8,10,12,14}
]
\addlegendimage{steelblue31119180}
\addlegendimage{mediumpurple148103189}
\addlegendimage{darkorange25512714}
\addlegendimage{crimson2143940}
\addlegendimage{forestgreen4416044}
\addlegendimage{sienna1408675}
\addplot [semithick, steelblue31119180, mark=x, mark size=3, mark options={solid}]
table {%
1 0
2 0
3 1.23065499284617
4 2.14897086737261
};
\addplot [semithick, darkorange25512714, mark=x, mark size=3, mark options={solid}]
table {%
1 0.0113899267737008
2 0.285268768611994
3 0.882560833919876
4 3.290755548152
5 4.35905006635215
6 4.80653877178278
7 7.32698244496995
8 10.6874509404321
};
\addplot [semithick, forestgreen4416044, mark=x, mark size=3, mark options={solid}]
table {%
1 1.09156130282997
2 2.91281020527273
3 4.8239650591232
4 6.01557613392188
5 7.44233581631182
6 9.2885679465521
7 10.7132663751508
8 13.0941963478402
9 13.8895691354364
10 14.1237467106078
11 14.5263152637431
};
\addplot [semithick, crimson2143940, mark=x, mark size=3, mark options={solid}]
table {%
1 0.958833766122003
2 2.06796526734476
3 3.81352195416874
4 5.16941713380989
5 6.84004232038851
6 8.84769193393223
7 10.22817645592
8 12.4992139813201
9 12.6932571445077
10 13.5096763667421
11 13.7217185241465
12 14.301369008083
13 14.9999994656709
};
\addplot [semithick, mediumpurple148103189, mark=x, mark size=3, mark options={solid}]
table {%
1 0.990541724045681
2 2.27377627576019
3 4.48339900662595
4 5.45529017283971
5 6.67802138053512
6 8.18023379834775
7 9.9675174182658
8 12.4111648400482
9 12.9368879205209
10 13.4684085218079
11 13.6995403337828
12 14.3149988691435
13 14.9999994656709
};
\addplot [semithick, sienna1408675, mark=x, mark size=3, mark options={solid}]
table {%
1 0.980355319797568
2 2.32355847663547
3 4.48276043826275
4 5.47659209309329
5 6.7734162804944
6 8.19974835512682
7 10.0320707870978
8 12.346833842223
9 12.8540221867216
10 13.437851877755
11 13.619084026677
12 14.3575119692276
13 14.9999994656709
};
\addplot [line width=1.5pt, black,dashed]
table {%
1 0
2 1
3 2
4 3
5 4
6 5
7 6
8 7
9 8
10 9
11 10
12 11
13 12
};
\end{axis}

\end{tikzpicture}}
  \caption{Minimum average satisfaction of $\ell$-supporting
groups. The dashed line is the function $f(\ell)=\ell-1$. Lines
stop in case no $\ell$-supporting group of this size exists.}\label{average-sati-kusama}
\end{figure*}

\begin{figure*}[tb]
\centering
\begin{minipage}[t]{.48\textwidth}
  \centering
  \resizebox{0.75\textwidth}{!}{\input{./plots_kusama/lost_stake_maximin}}
  \captionof{figure}{Minimum stake lost in case $\ell$ committee members misbehave.}\label{minimum-stake-kusama}
\end{minipage}\hfill
\begin{minipage}[t]{.48\textwidth}
  \centering
  \resizebox{0.75\textwidth}{!}{\input{./plots_kusama/cost_of_buying.tex}}
  \captionof{figure}{Stake needed to replace committee
members; y-axis is logarithmic.}\label{cost-of-replacing-kusama}
\end{minipage}
\end{figure*}

We now give a brief description and experimental evaluation of the elections from the Kusama network. 
The network follows a very similar protocol to the one for Polkadot. 
However, there are a few differences. 
First, in Kusama, multiple elections are conducted each day. 
Because of this, we were able to collect  $1520$ elections conducted on the Kusama network. 
Moreover, in Kusama the maximum ballot length is $24$, so each voter can approve up to $24$ candidates (and not only up to $16$ as in Polkadot). 
The committee size in Kusama is $k=1000$, more than three times as large as in Polkadot. 

\paragraph{Description of the Data}
Analogous to \Cref{fig:basic1}, \Cref{fig:basic1-app} shows the size of the Kusama elections. Compared to the Polkadot elections, the Kusama elections have more candidates and fewer voters. 
Moreover, as for the Polkadot data, we see a steady increase in the number of voters  over time and a smaller fluctuation in the number of candidates. 
\bigskip

We also repeated our core experiments presented in the main body on the Kusama elections. 
However, due to the larger size and increased computation time for the Kusama elections (which have more candidates and most critically a much higher $k$), we only ran the experiments on a subset of $20$ elections, which are uniformly distributed over the available set of elections. 

\paragraph{Overlap between Committees}
\Cref{tab:overlapKus} depicts the average overlap of the committees computed by the different voting rules. 
As for Polkadot, the outcomes returned by the rules are typically very similar to each other (the difference is typically less than $10\%$). 
However, there are small differences in how the rules relate to each other here. 
In particular, the proportional rules show a slightly more diverse behavior and the separation between AV and SAV is less pronounced. 
In particular, while MES and seq-Phragmén still produce very similar committees, outcomes returned by seq-PAV are clearly closest to the ones returned by SAV. This different behavior of seq-PAV on the Kusama data is also reflected in some of our further experiments. 
Moreover, for Kusama, Phragmms is at a roughly similar distance from all other rules (including AV). 
Especially the similarity between Phragmms and AV is surprising here. 

\paragraph{Preventing Underrepresentation}
Analogous to \Cref{tab:overrep,app:rep_app}, \Cref{underrep-kusama} shows some measures regarding the prevention of underrepresentation. 
Moreover, as in \Cref{fig:representation3}, \Cref{average-sati-kusama} shows the minimum average satisfaction of $\ell$-supporting groups in our elections. 

The general trends are similar as for Polkadot, which is why we only point to the differences here, which mostly affect seq-PAV. 
Regarding the satisfaction of $\ell$-supporting groups, as for Polkadot, the proportional rules substantially outperform the best possible guarantee of $\ell-1$. 
The only differences are that the gap between seq-PAV and the other proportional rules is more pronounced and that SAV shows a better performance here.

Concerning priceability, seq-PAV shows a worse performance on the Kusama than on the Polkadot elections. 

Lastly, for both AV and SAV, the average number of candidates violating EJR+ is more than twice as high as for JR. This indicates that on the Kusama elections, there is a practical difference between these two notions of proportionality. 

\paragraph{Preventing Overrepresentation}

Analogous to \Cref{tab:overrep}, \Cref{tab:overrep-kusama} shows some measures related to the prevention of overrepresentation. 
The trends in the results are again similar as for Polkadot. 
The biggest difference is that seq-PAV performs worse with respect to the maximin support value and produces values closer to those of AV than those of Phragmms. 
Another (small) difference is that AV performs slightly better with respect to the exogenous cost of replacing $\frac{k}{3}$ and $\frac{k}{2}$ committee members (see also \Cref{cost-of-replacing-kusama} for the minimum exogenous cost of replacing committee members).

As \Cref{fig:overrep4}, \Cref{minimum-stake-kusama} depicts the minimum stake lost in Kusama, i.e., the minimum amount of stake assigned to a group of $\ell$ committee members in the maximin assignment. The results are very similar as for Polkadot, with the only difference being that seq-PAV performs slightly worse and is now more similar to AV and SAV than to the other proportional rules.

\end{document}